\documentclass[a4paper,11pt]{article}
\usepackage{etoolbox}
\newtoggle{anonymous}
\togglefalse{anonymous}

\usepackage{fullpage}
\usepackage{amsmath,amssymb,amsfonts}%
\usepackage{amsthm}%
\usepackage{mathrsfs}%
\usepackage{graphicx}%
\usepackage{multirow}%

\usepackage{multirow}%

\usepackage{dsfont}
\usepackage{mathrsfs}
\usepackage{amssymb,amsmath,amsfonts,amsthm}
\usepackage[parfill]{parskip}
\usepackage{enumitem} % for custom enumeration
\setlist[enumerate,1]{label={(\roman*)}}
\usepackage{url} % for URL link
\usepackage{xspace}
\usepackage[most]{tcolorbox}

\usepackage[ruled,vlined]{algorithm2e}

\usepackage{changepage}
\usepackage{thmtools}

\usepackage{graphicx}
\usepackage{subcaption}

\usepackage{tikz}
\usetikzlibrary{calc}
\usetikzlibrary{positioning, backgrounds, fit}
\usetikzlibrary{decorations.pathmorphing}
\usetikzlibrary{intersections}

\usepackage{xcolor} % For custom colors

\definecolor{FigRed}{RGB}{046,037,133}
\definecolor{FigBlue}{RGB}{051,117,056}
\definecolor{FigGreen}{RGB}{148,203,236}

\definecolor{linkGreen}{RGB}{175,0,0}

\definecolor{linkblue}{RGB}{0, 102, 204} 
\usepackage[
colorlinks=true,
linkcolor=linkblue,     % Color for links (e.g., theorems, equations)
citecolor=linkGreen,     % Color for citations
urlcolor=linkblue       % Color for URLs
]{hyperref}

\usepackage{dirtytalk} % quotations with \say
\usepackage{mathtools} % to use \coloneqq

\usepackage[capitalize,nameinlink]{cleveref}

\Crefname{figure}{Figure}{Figures}
\Crefname{claim}{Claim}{Claims}
\Crefname{algorithm}{Algorithm}{Algorithms}

\crefformat{equation}{\textup{#2(#1)#3}}
\crefrangeformat{equation}{\textup{#3(#1)#4--#5(#2)#6}}
\crefmultiformat{equation}{\textup{#2(#1)#3}}{ and \textup{#2(#1)#3}}
{, \textup{#2(#1)#3}}{, and \textup{#2(#1)#3}}
\crefrangemultiformat{equation}{\textup{#3(#1)#4--#5(#2)#6}}%
{ and \textup{#3(#1)#4--#5(#2)#6}}{, \textup{#3(#1)#4--#5(#2)#6}}{, and \textup{#3(#1)#4--#5(#2)#6}}

\Crefformat{equation}{Equation~\textup{#2(#1)#3}}
\Crefrangeformat{equation}{Equations~\textup{#3(#1)#4--#5(#2)#6}}
\Crefmultiformat{equation}{Equations~\textup{#2(#1)#3}}{ and \textup{#2(#1)#3}}
{, \textup{#2(#1)#3}}{, and \textup{#2(#1)#3}}
\Crefrangemultiformat{equation}{Equations~\textup{#3(#1)#4--#5(#2)#6}}%
{ and \textup{#3(#1)#4--#5(#2)#6}}{, \textup{#3(#1)#4--#5(#2)#6}}{, and \textup{#3(#1)#4--#5(#2)#6}}

\newtheorem{theorem}{Theorem}[section]

\newtheorem{corollary}[theorem]{Corollary}
\newtheorem{observation}[theorem]{Observation}

\newtheorem{lemma}[theorem]{Lemma}

\theoremstyle{definition}

\theoremstyle{remark}

\newcommand{\cG}{\ensuremath{\mathcal{G}}}

\newcommand{\bfI}{\ensuremath{\mathcal{I}}}

\usepackage{soul}
\newcommand{\SuggestChange}[2]{{\color{red} \relax\ifmmode\text{\st{$#1$}}\else \st{#1}\fi}\ {\color{blue} #2}}

\definecolor{samplingboxcolor}{HTML}{FCF5E5} % Light pastel beige
\definecolor{titleboxcolor}{HTML}{EAEAEA}   % Light gray for title
\definecolor{bordercolor}{rgb}{0.7, 0.7, 0.7} % Soft gray for border

\newtcolorbox[auto counter, number within=section]{boxsampling}[1][]{
	colback=samplingboxcolor,  % Light background color
	colframe=bordercolor,      % Soft gray border
	coltitle=black,            % Black title text for readability
	colbacktitle=titleboxcolor,% Light gray title background
	fonttitle=\bfseries\large, % Larger and bold title font
	title=#1,                  % Uses the argument as the title
	rounded corners,           % Softer edges
	boxrule=0.75pt,            % Thin border for subtle look
	enhanced                   % Enables advanced styling
}

\newtcolorbox[auto counter, number within=section]{boxproblem}[1][]{
	colback=samplingboxcolor,  % Light background color
	colframe=bordercolor,      % Soft gray border
	coltitle=black,            % Black title text for readability
	colbacktitle=titleboxcolor,% Light gray title background
	fonttitle=\scshape\large, % Larger and bold title font
	title=#1,                  % Uses the argument as the title
	rounded corners,           % Softer edges
	boxrule=0.75pt,            % Thin border for subtle look
	enhanced                   % Enables advanced styling
}

\usepackage{thm-restate}

\usepackage{enumitem,amssymb}
\newlist{todolist}{itemize}{2}
\setlist[todolist]{label=$\square$}
\usepackage{pifont}

\allowdisplaybreaks

\newcommand{\cint}[2]{[\,#1,\,#2\,]_{\circlearrowright}}
\newcommand{\coint}[2]{[\,#1,\,#2\,)_{\circlearrowright}}

\usepackage{thm-restate}

\allowdisplaybreaks

\title{Exploration of $k$-edge-deficient temporal graphs in linear time}

\iftoggle{anonymous}{% ANONYMOUS
	\author{ }
}{%
	\author{Ivan Lahtin\thanks{Moscow Institute of Physics and Technology (MIPT), Russia} 
	\and
	Viktor Zamaraev\thanks{School of Computer Science and Informatics, University of Liverpool, UK}
	}
}
\date{}

\usepackage{url}

\begin{document}
	
	\maketitle

    \begin{abstract}
        We study the \emph{Temporal Exploration} problem, where an agent must visit all vertices of a temporal graph while traversing at most one available edge per time step. Unlike static graphs, which can be explored in linear time, temporal constraints can substantially increase exploration time even when every snapshot of the graph is connected.

        To better understand the source of this complexity, we focus on a near-static setting and consider \emph{always-connected $k$-edge-deficient temporal graphs}, in which each snapshot is connected and differs from a fixed underlying $n$-vertex graph by at most $k$ edges. Although such graphs are structurally close to static graphs, they can still exhibit non-trivial temporal behaviour. Prior work showed that these graphs can be explored in $O(kn \log n)$ time steps and established a lower bound of $\Omega(n \log k)$, leaving open whether linear-time exploration in $n$ is possible.

        We resolve this question by showing that any always-connected $k$-edge-deficient temporal graph admits an exploration schedule of length $O(nk \log k)$. Moreover, given such a temporal graph, the corresponding exploration schedule can be computed in polynomial time.
        The obtained bound is linear in the number of vertices up to a factor depending only on $k$, removes the extraneous logarithmic dependence on $n$, and is nearly optimal. In particular, for constant $k$, our result yields an order-optimal $\Theta(n)$ exploration time, showing that temporal exploration in this near-static regime essentially retains the linear-time character of static graph traversal.
    \end{abstract}

	\newpage

	%%%%%%%%%%%%%%%%%%%%%%%%%%%%%
	\section{Introduction}
	\label{sec:intro}
	%%%%%%%%%%%%%%%%%%%%%%%%%%%%% 

	Temporal graphs—graphs whose edge sets evolve over time—provide a natural abstraction for dynamic networks in which interactions are inherently time-dependent, and have been studied extensively in the literature (see surveys~\cite{CFQS12,MichailSurvey}). Such models arise in a wide range of settings, including mobile communication networks, transportation systems, and evolving social or biological networks (see, e.g.,~\cite{HS12,HS19}). In contrast to static graphs, temporal graphs capture not only which connections are possible, but also \emph{when} they are available, and this additional temporal dimension fundamentally alters the algorithmic behaviour of even the most basic graph problems.
	
	One of the most elementary algorithmic tasks in this setting is \emph{temporal exploration}. In the \emph{Temporal Exploration} problem, introduced by Michail and Spirakis~\cite{MS16}, an agent starts at a designated vertex and must visit all vertices of a temporal graph while traversing at most one available edge per time step. Temporal exploration abstracts navigation and information gathering in dynamic environments and is commonly used to illustrate how temporal constraints can fundamentally change the complexity of otherwise simple graph problems. In static graphs, exploration is straightforward: any connected graph can be traversed in linear time using depth-first search. A natural question is whether comparable efficiency can be achieved when edges change over time.
	
    The answer, in general, is negative. Temporal constraints can dramatically increase the cost of exploration. 
    In fact, there exist temporal graphs in which every snapshot is connected, yet any exploration requires quadratic time in the number of vertices~\cite{EHK21}. This stark contrast with the static setting highlights that temporal exploration is not merely a minor extension of classical graph traversal, but a fundamentally different algorithmic problem in which time itself becomes a limiting resource.
	
	This intuition is reinforced by a range of strong hardness results. Beyond large worst-case exploration times, even deciding whether a temporal exploration exists within a prescribed lifetime is NP-complete~\cite{MS16}, and this intractability persists even under extreme structural restrictions, including instances where each snapshot is a tree and the underlying graph has pathwidth~$2$~\cite{BZ19}.
	These results demonstrate that restricting the structure of the underlying graph alone does not suffice to recover tractability, and that the difficulty of temporal exploration cannot be attributed solely to static graph complexity.

	To isolate the source of this difficulty, much recent work has focused on \emph{always-connected temporal graphs}, in which every snapshot is connected. This restriction eliminates disconnection as a trivial obstruction and ensures that some route exists between any pair of vertices within any $n$ consecutive steps \cite{MS16}. Thus, always-connected temporal graphs provide a clean and well-motivated framework in which temporal effects are the main source of inefficiency.
	
    A general upper bound of $O(n^2)$ steps is known for exploration in always-connected temporal graphs, and a substantial body of work has aimed to improve this bound under additional restrictions. 
    Faster exploration has been obtained for various special classes, including temporal graphs of bounded degree~\cite{ES18,EKLSS19,BGMR25}, temporal graphs with additional symmetries~\cite{DEKMM24}, graphs with specific underlying structures~\cite{IKW14,EHK21,Tag20,AGMZ22}, as well as in randomized~\cite{EHK21,BGKSSZ26} and parameterised~\cite{ES23,AFGW23} settings. Despite this progress, significant gaps between known upper and lower bounds persist even for some fundamental cases. Consequently, we still lack a clear structural understanding of when temporal exploration is inherently hard and when it can approach the linear-time behaviour of static graphs.

	To better understand which aspects of temporal variation are responsible for this gap, a particularly appealing direction is to study temporal graphs that remain close to a fixed static structure over time. 
    One formalization of this intuition is \emph{$k$-edge-deficiency}, introduced by Erlebach and Spooner~\cite{ES22}, where each snapshot differs from a fixed underlying graph by the absence of at most $k$ edges. Such graphs can be viewed as near-static: temporally, they undergo only limited disruption, yet they are not static and may still exhibit complex temporal behaviour. This setting offers a natural testbed for understanding how small temporal deviations impact exploration complexity.

    For certain highly structured temporal graphs, $k$-edge-deficiency does permit linear-time exploration. In particular, always-connected temporal graphs whose underlying graph is a cycle or a cycle with a constant number of chords can be explored in a linear number of steps~\cite{EHK21,Tag20,AGMZ22}. However, these results rely crucially on strong restrictions on the underlying graph and do not extend to general $k$-edge-deficient temporal graphs, where prior to this work only superlinear upper bounds were known.
	
	In particular, Erlebach and Spooner~\cite{ES22} showed that always-connected $k$-edge-deficient temporal graphs admit exploration schedules of length $O(kn \log n)$, and that $\Omega(n \log k)$ steps may be necessary in the worst case. These results establish that $k$-edge-deficiency does not trivialize temporal exploration, but they leave open a central question: is the logarithmic dependence on $n$ inherent, or can exploration be made linear in the number of vertices, up to a factor depending only on $k$? 
	
	In the present paper, we answer this question in the affirmative. We show that any always-connected $k$-edge-deficient temporal graph on $n$ vertices admits an exploration schedule of length $O(n k \log k)$. Moreover, given such a temporal graph, the corresponding schedule can be computed by a Las Vegas\footnote{A
    Las Vegas randomized algorithm is a randomized algorithm that always returns a correct solution.} algorithm in expected polynomial time, and deterministically in polynomial time when $k$ is constant. The above bound is linear in $n$ for fixed $k$, removes the $\log n$ dependence 
    and nearly matches the known lower bound up to a factor of $k$.

    %%%%%%%%%%%%%%%%%%%%%%%%%%%%%%%%%%%%%%%%%%%%%%
	\subsection{Our contribution}
    %%%%%%%%%%%%%%%%%%%%%%%%%%%%%%%%%%%%%%%%%%%%%%
    
    Informally, our main contribution has two parts. First, we prove that if a temporal graph remains temporally connected over every window of $\Delta$ consecutive time steps and admits a fixed spanning tree such that in every snapshot at most $k$ edges of this tree are absent, then it can be explored in $O((k\Delta + n)\log k)$ steps. 
    Second, we show that given such a graph, the corresponding exploration schedule can be computed efficiently.
    To state these results formally, we introduce the necessary definitions and notation,
    which slightly generalize the standard notions of always-connectedness and
    $k$-edge-deficiency used in previous work.
    
    A \emph{temporal graph $\cG$} is an ordered sequence
    $\langle G_1,G_2,\dots,G_L\rangle$ of graphs on the same vertex set.
    The graphs in the sequence are called the \emph{snapshots} of $\cG$, and the number
    $L$ of these graphs is called the \emph{lifetime} of $\cG$.
    The \emph{underlying graph} of $\cG$ is the (static) graph with the same vertex set
    as $\cG$ that contains exactly those edges that appear in at least one snapshot of
    $\cG$.
    For integers $1 \le t \le t' \le L$, we denote by $\cG_{[t,t']}$ the temporal graph
    $\langle G_t, G_{t+1}, \dots, G_{t'} \rangle$, that is, the restriction of $\cG$ to
    the consecutive snapshots with indices from $t$ to $t'$.
    A \emph{temporal walk} in $\cG$ from a vertex $v$ to a vertex $u$ is a sequence
    $(v = v_0, e_1, v_1, e_2, \dots, e_\ell, v_\ell = u)$ together with a strictly
    increasing sequence of time steps
    $1 \le t_1 < t_2 < \dots < t_\ell \le L$
    such that for each $i \in [\ell]$, the edge
    $e_i = \{v_{i-1}, v_i\}$ is present in snapshot $G_{t_i}$.
    The \emph{length} of such a temporal walk is the number $\ell$ of its edges.
    We say that $\cG$ can be \emph{explored from vertex $v$ in at most $\ell$ steps} if there exists a temporal walk of length at most $\ell$ that starts at $v$ and visits
    all vertices of $\cG$.
    We say that $\cG$ can be \emph{explored from vertex $v$} if it can be explored from $v$ within its lifetime, i.e., in at most $L$ steps.
    
    Temporal graph $\cG$ is \emph{temporally connected} if for every ordered pair of
    vertices $v,u$, there exists a temporal walk from $v$ to $u$ in $\cG$.
    For a natural number $\Delta$, we say that $\cG$ is \emph{$\Delta$-temporally
    connected} if for every $t \in [L-\Delta+1]$, the temporal graph
    $\cG_{[t, t + \Delta - 1]}$ is temporally connected.

    For a natural number $k$ and a (static) graph $F$ on the same vertex set as $\cG$, we say that a snapshot $G$ of $\cG$ is \emph{$k$-edge-deficient with respect to $F$} if all but at most $k$ edges of $F$ are present in $G$. When $F$ is clear from the context we may omit "with respect to $F$". We say that $\cG$ is \emph{$k$-edge-deficient} if all snapshots of $\cG$ are $k$-edge-deficient with respect to a spanning tree of the underlying graph of $\cG$.
    
    These definitions allow us to state our result in a more general setting than the
    always-connected $k$-edge-deficient temporal graphs considered previously.
    Instead of assuming that every snapshot is connected, we work with the more
    flexible notion of $\Delta$-temporal connectivity.
    The parameter $\Delta$ is introduced to measure the cost of temporal repositioning separately from the cost of traversing the near-static spanning tree.
    Every always-connected temporal graph on $n$ vertices is
    $(n-1)$-temporally connected~\cite{MS16}, whereas the converse does not hold in
    general, and the dependence of our bound on $\Delta$ explicitly captures this
    relaxation.
    Moreover, our notion of $k$-edge-deficiency is weaker than that of Erlebach and
    Spooner~\cite{ES22}, who require each snapshot to differ from the \emph{underlying graph} by at most $k$ edges; in contrast, we only assume the existence of a spanning tree whose edges are stable in the sense that in every snapshot all but at most $k$ edges of this tree are present.
    This permits substantially more temporal variability outside the tree structure,
    while still retaining a near-static backbone that can be exploited for fast exploration.
    With the above terminology at hand, we can now state our main results.

    \begin{restatable}{theorem}{KDefExploration}
        \label{th:k-def-exploration}
        Let $\Delta$, $k$, and $n$ be natural numbers. Let $\cG$ be a $\Delta$-temporally connected temporal graph on $n$ vertices, and let $T$ be a spanning tree of the underlying graph of $\cG$. If $\cG$ contains at least $\lceil 18k \ln{(6k)} \rceil(\Delta+ \frac{n}{k})$ snapshots that are $k$-edge-deficient with respect to $T$, then $\cG$ can be explored from any vertex.
    \end{restatable}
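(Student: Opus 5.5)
We follow a DFS-like traversal of the tree $T$, processed in phases, and use $\Delta$-temporal connectivity to reposition the agent between phases. Fix an Euler tour $W$ of $T$ starting from the start vertex; it has length $2(n-1)$ and visits every vertex. Cut $W$ into $m=\lceil n/k\rceil$-ish consecutive segments, each of length about $2k$. If every edge of a segment is present in a $k$-edge-deficient snapshot, the agent walks the segment in consecutive good snapshots. Only the missing tree edges (at most $k$ per snapshot) can block progress, so we need to handle them.

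\textbf{Randomised relabelling.} We introduce randomness as follows. Choose a random rotation, or better a uniformly random subset or ordering, of the "targets" so that any fixed set of at most $k$ missing edges intersects a short target segment only with bounded probability. The cleaner variant uses the following observation. In a $k$-edge-deficient snapshot, removing the at most $k$ missing edges splits $T$ into at most $k+1$ components, and within a component the agent can move along tree edges. If we pick a random vertex set $S$ of size about $\ln(6k)$, or a random target vertex $x$, then with probability at least roughly $1/(k+1)$, and after amplification constant, the agent's current component contains the desired unvisited part.

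\textbf{Phase structure.} We organise the time into $R=\lceil 18k\ln(6k)\rceil$ rounds. Each round consists of $\Delta$ steps of repositioning, using $\Delta$-temporal connectivity to reach any chosen vertex (waiting is allowed), followed by about $n/k$ good snapshots. The hypothesis guarantees that the good snapshots can be grouped so that each round contains $\Delta+n/k$ of them. Since only good snapshots are counted, the repositioning windows must be taken from the full timeline. I would handle this by noting that any $\Delta$ consecutive snapshots suffice for repositioning, so the $\Delta$ good snapshots reserved per round upper-bound a window. In each round, the agent moves to the first unvisited vertex $u$ of the Euler tour order, then explores greedily along $W$ from $u$ during the good snapshots, moving whenever the next tour edge is present and otherwise waiting. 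A tour edge absent in the current snapshot stalls progress, but there are at most $k$ such edges per snapshot.

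\textbf{Potential argument (main obstacle).} I would argue that each round visits, in expectation or for a suitable choice, about $n/(k\cdot\text{something})$ new vertices, so that $O(k\log k)$ rounds cover all $n$ vertices. The hardest step is showing that an adversarial pattern of missing edges cannot stall the greedy walk in every round. The adversary can repeatedly delete the next tour edge. My plan is to use a coupon-collector style argument over $k$ "blocking slots". Partition the remaining tour into $k$ pieces of length about $n/k$ and start each round at a uniformly random remaining piece. A fixed snapshot sequence within a round can block at most $k$ edges at a time, but to block the walk persistently, a specific edge must be missing in most good snapshots of the round. I would then show by a counting argument that within one round of $n/k$ good snapshots, at most $k$ pieces can be "heavily blocked", so a random piece is processed with probability at least a constant. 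The $\ln(6k)$ factor then comes from needing every one of the $k$ pieces to be completed, which is the coupon collector over $k$ pieces, with a union bound giving failure probability below $1$. Since the probability is positive, a good deterministic choice exists, which proves the existence of an exploration. Making the "heavily blocked" counting rigorous against an adaptive sequence is where I expect the real difficulty. The constant $18$ and the $6k$ inside the logarithm come from tuning this union bound.
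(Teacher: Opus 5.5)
\textbf{There is a genuine gap, at exactly the step you flag, and as stated that step is false, not merely unproven.} You fix $k$ pieces of the remaining tour in advance and start one greedy walker at the beginning of a random piece. A $k$-edge-deficient snapshot may omit $k$ tree edges. So every good snapshot of a round can omit the first tour edge of each of your $k$ pieces, and then every choice of piece makes zero progress. Your claim that ``at most $k$ pieces can be heavily blocked'' is vacuous when there are only $k$ pieces, so no constant success probability follows. Moreover, each tree edge occurs twice in the Euler tour, so up to $2k$ tour edges can be blocked at once, not $k$. More generally, no counting argument over a predetermined set of start points can work, because the deletions can be spent precisely on those points. The ``randomised relabelling'' paragraph does not repair this: the $1/(k+1)$ probability is not derived from anything.

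\textbf{The missing idea is to choose the starting points after seeing the snapshots.} The paper does this as follows.
\begin{itemize}
\item In each epoch it runs a virtual agent from every one of the $N=2(n-1)$ tour positions over $\lfloor N/2k\rfloor$ good snapshots, repeatedly deleting any agent whose visited arc is covered by the others' arcs.
\item Survivors occupy distinct states, no state lies in three survivors' arcs, and each survivor has a private state. Hence the total arc length is at most $2N-|A(t)|$.
\item Meanwhile all but at most $2k$ survivors advance at each step. Together these force $|A(t)|\le 6k$.
\item Each survivor has traversed the whole stretch from its start to the next survivor's start.
\end{itemize}
Over $\rho=\lceil 18k\ln(6k)\rceil$ epochs, each preceded by $\Delta$ repositioning steps (your handling of these windows is essentially the paper's), one takes the common refinement of the $\rho$ partitions. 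This has at most $6k\rho$ cells, and in every epoch each cell is fully covered by some survivor. A uniformly random choice of one survivor per epoch therefore misses a fixed cell with probability at most $(1-1/(6k))^{\rho}\le (6k)^{-3}$. A union bound over cells gives failure probability at most $11/12$. Your coupon-collector and union-bound ending has this shape, but it needs the per-epoch covering structure, which greedy walks from fixed pieces cannot provide.
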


    For always-connected $k$-edge-deficient temporal graphs, where $\Delta = n-1$ and every snapshot is $k$-edge-deficient with respect to some fixed tree, \cref{th:k-def-exploration} yields an
    exploration time of $O(nk \log k)$, improving upon the earlier best-known $O(kn \log n)$ bound~\cite{ES22} by eliminating the dependence on $\log n$. More generally, the bound scales linearly with the temporal connectivity parameter $\Delta$, revealing that faster exploration is possible when connectivity is guaranteed over shorter time windows. In particular, when $\Delta$ is constant we obtain an $O(n \log k)$ exploration time, which matches the $\Omega(n \log k)$ lower bound of Erlebach and Spooner~\cite{ES22}.

    We note that \cref{th:k-def-exploration} does not require every snapshot to be $k$-edge-deficient, and therefore applies to a more general class of temporal graphs than standard $k$-edge-deficient temporal graphs. We use this additional flexibility to derive an efficient algorithm for computing exploration schedules in the $k$-edge-deficient setting.

    \begin{restatable}{theorem}{KDefExplorationAlg}
        \label{th:k-def-exploration-alg}
        Let $\Delta$, $k$, and $n$ be natural numbers.
        Let $\cG$ be a $\Delta$-temporally connected $k$-edge-deficient temporal graph on $n$ vertices. If the lifetime of $\cG$ is at least
        $2\lceil 36k \ln{(12k)} \rceil(\Delta+ \frac{n}{2k})$,
        then $\cG$ can be explored from any vertex. Moreover, given $\cG$, the corresponding exploration schedule can be computed by a Las Vegas algorithm in expected polynomial time, and deterministically in polynomial time when $k$ is constant.
    \end{restatable}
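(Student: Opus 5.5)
The plan is to reduce the statement to \cref{th:k-def-exploration} with $k$ replaced by $2k$. The numbers are set up for this: with $2k$ in place of $k$, the threshold of \cref{th:k-def-exploration} becomes $\lceil 18\cdot 2k\ln(6\cdot 2k)\rceil(\Delta+\frac{n}{2k}) = \lceil 36k\ln(12k)\rceil(\Delta+\frac{n}{2k})$. This is exactly half of the assumed lower bound on the lifetime $L$. So it suffices to find a spanning tree $T'$ of the underlying graph such that at least $L/2$ snapshots are $2k$-edge-deficient with respect to $T'$. Since $\cG$ is $\Delta$-temporally connected by assumption, \cref{th:k-def-exploration} then gives exploration from any vertex.

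\textbf{Step 1: finding $T'$ by a minimum spanning tree.} Note that $T'$ has to be found: the definition of $k$-edge-deficiency only guarantees that some tree $T$ exists, and $T$ is not part of the input. Give each edge $e$ of the underlying graph the weight $w(e)=|\{t\in[L] : e\notin G_t\}|$, the number of snapshots missing $e$. For any spanning tree $S$,
\[
w(S)=\sum_{t=1}^{L}|E(S)\setminus E(G_t)|,
\]
the total number of missing tree edges summed over all snapshots. The unknown tree $T$ has $w(T)\le kL$, so a minimum spanning tree $T'$, computed deterministically in polynomial time, also satisfies $w(T')\le kL$. By averaging (Markov), the number of snapshots missing more than $2k$ edges of $T'$ is less than $kL/(2k)=L/2$. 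Hence at least $L/2$ snapshots are $2k$-edge-deficient with respect to $T'$, and the existence part follows from \cref{th:k-def-exploration}.

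\textbf{Step 2: algorithmic part.} Here I would revisit the proof of \cref{th:k-def-exploration} and make it constructive. The $k\ln k$ factor suggests that the proof picks a random object, for example random time points or random splitting of the tree into pieces, and shows a good choice exists with positive probability via a union or coupon-collector-style bound. The plan is as follows.
\begin{enumerate}
\item Show that each step of that proof, given the random choices, is polynomial-time computable. This includes temporal reachability within $\Delta$-windows (computed by earliest-arrival search) and the walk along $T'$.
\item Show that success can be checked in polynomial time by simulating the resulting walk.
\item Show that the success probability is at least a constant.
\end{enumerate}
Repeating until the check succeeds then gives a Las Vegas algorithm with expected polynomial running time. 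For constant $k$, I would derandomize by exhaustive enumeration: if the random choice consists of $O(k\log k)$ selections from polynomially many options, then there are $n^{O(k\log k)}$ candidates, which is polynomial for constant $k$. An alternative is the method of conditional expectations, if the failure bound is a sum of computable terms.

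\textbf{Main obstacle.} The hard part is Step 2. It depends on the internal structure of the proof of \cref{th:k-def-exploration}, specifically whether its probabilistic argument yields a constant success probability and a bounded, enumerable sample space. Step 1 is straightforward.
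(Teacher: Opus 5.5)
Your Step 1 is correct and matches the paper's \cref{lem:find-good-tree}. A minimum spanning tree for the weights ``number of snapshots missing $e$'' weighs no more than the hidden tree, and averaging leaves more than half the snapshots $2k$-edge-deficient. With $k$ replaced by $2k$, this meets the threshold of \cref{th:k-def-exploration}. The one difference is that the paper computes these weights only over the first $2q$ snapshots, $q=\lceil 36k\ln(12k)\rceil(\Delta+\frac{n}{2k})$, and works in $\cG_{[1,2q]}$. That also guarantees the exploration ends by time $2q$. With weights over all of $[L]$, the good snapshots may lie late in the lifetime, which is harmless for the theorem as stated.

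The genuine gap is Step 2, i.e.\ the whole algorithmic half of the statement. \cref{th:k-def-exploration} is purely existential. Your three items are properties you hope its proof has, not things you establish, and the randomness is not over time points or tree decompositions. The missing content is \cref{lem:known-span-exploration}:
\begin{enumerate}
\item With $2k$ in place of $k$, split the good snapshots into $\rho=\lceil 36k\ln(12k)\rceil$ epochs.
\item In each epoch, after the $\Delta$ repositioning steps, the roundabout process of \cref{sec:roundabout-exploration} is a \emph{deterministic} polynomial-time simulation. Movement follows the DFS tour, and elimination uses interval-containment tests. By \cref{cor:active-agents}, this leaves a set $S^{(i)}$ of at most $12k$ surviving agents.
\item The only remaining choice is one survivor per epoch. \cref{lem:I-covering}, a union bound over the at most $12k\rho$ intervals of the common refinement, shows that at least a $1/12$ fraction of $S^{(1)}\times\cdots\times S^{(\rho)}$ covers the entire tour.
\item A tuple is checked by testing whether the chosen agents' visited circular intervals cover $[N]$. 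The repositioning walks of length at most $\Delta$ are foremost journeys.
\end{enumerate}
This gives an expected number of trials of at most $12$. For constant $k$, the sample space has at most $(12k)^{\rho}=O(1)$ elements, so enumeration is deterministic.

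Once these facts are in place, your template works. Your conditional-expectations idea also applies. Let $c_i(I)\ge 1$ be the number of epoch-$i$ survivors covering $I$. The expected number of uncovered refinement intervals, $\sum_I\prod_i(1-c_i(I)/|S^{(i)}|)$, is at most $11/12$, and all its conditional versions are exactly computable. This would even give a deterministic algorithm for every $k$.
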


    Specializing \cref{th:k-def-exploration-alg} to always-connected temporal graphs gives the headline bound, together with the corresponding algorithmic guarantee.

    \begin{corollary}\label{cor:k-deficient-always-connected}
        Let $k$ and $n$ be natural numbers. Let $\cG$ be an always-connected $k$-edge-deficient temporal graph on $n$ vertices with sufficiently many snapshots.
        Then $\cG$ can be explored from any vertex in $O(n k \log k)$ steps. Furthermore, given $\cG$, the corresponding exploration schedule can be computed by a Las Vegas algorithm in expected polynomial time, and deterministically in polynomial time when $k$ is constant.
    \end{corollary}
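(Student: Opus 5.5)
The corollary follows from \cref{th:k-def-exploration-alg} once we fix $\Delta$ for always-connected graphs. The plan is to verify the hypotheses, choose $\Delta = n-1$, and simplify the lifetime bound.

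First, the hypotheses. Let $\cG$ be always-connected and $k$-edge-deficient in the sense of the corollary: every snapshot is connected and $k$-edge-deficient with respect to a spanning tree $T$ of the underlying graph. In the Erlebach--Spooner sense each snapshot misses at most $k$ edges of the underlying graph, so any spanning tree of the underlying graph works as $T$. Hence $\cG$ is $k$-edge-deficient as defined in this paper. By the result of~\cite{MS16}, every always-connected temporal graph on $n$ vertices is $(n-1)$-temporally connected. The short argument: in any window of $n-1$ consecutive connected snapshots, the set of vertices reachable from $v$ grows by at least one vertex per step until it contains all vertices, because each snapshot has an edge leaving the current reachable set. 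So $\cG$ is $\Delta$-temporally connected with $\Delta = n-1$ (take $\Delta=1$ if $n=1$, which is trivial).

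Second, the bound. "Sufficiently many snapshots" is read as: the lifetime $L$ is at least $L_0 := 2\lceil 36k\ln(12k)\rceil\,(n-1+\frac{n}{2k})$. If $L$ is larger, restrict to the prefix $\cG_{[1,L_0']}$ with $L_0' = \lceil L_0\rceil$. This prefix is still always-connected, hence $(n-1)$-temporally connected. It is also still $k$-edge-deficient with respect to $T$, possibly with a smaller underlying graph; then $T$ may not lie in the prefix's underlying graph. To handle this, either note that the proof of \cref{th:k-def-exploration} only needs $T$ to be a spanning tree on the vertex set that is $k$-deficient in the snapshots, or apply \cref{th:k-def-exploration} to the prefix directly, since its hypotheses only ask that $T$ span the underlying graph of the full $\cG$ and count deficient snapshots. \cref{th:k-def-exploration-alg} then gives an exploration within the prefix's lifetime, i.e.\ in at most $L_0'$ steps. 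Since $\frac{n}{2k}\le n$ and $\lceil 36k\ln(12k)\rceil = O(k\log k)$, we get $L_0' = O(k\log k\cdot n) = O(nk\log k)$.

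Third, the algorithm. The algorithmic claims come directly from \cref{th:k-def-exploration-alg} applied to the prefix. Computing $T$, for example any spanning tree of the underlying graph, and truncating $\cG$ both take polynomial time.

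The only delicate step is the one in the second paragraph: making sure the truncation keeps the theorem's hypothesis about $T$. If needed, I would apply \cref{th:k-def-exploration} with the full underlying tree, viewing the tail snapshots as irrelevant. Everything else is direct substitution.
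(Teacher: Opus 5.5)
Your proposal is correct and follows the paper's route: the corollary is just \cref{th:k-def-exploration-alg} with $\Delta=n-1$ (always-connected implies $(n-1)$-temporally connected), giving $2\lceil 36k\ln(12k)\rceil\,(n-1+\frac{n}{2k})=O(nk\log k)$ steps together with the algorithmic guarantee. The truncation detour is not needed, because the proof of \cref{th:k-def-exploration-alg} already constructs the walk inside $\cG_{[1,2q]}$ using a tree of the full underlying graph, tacitly relying on exactly your first observation that $T$ need not lie in the prefix's underlying graph (your second suggested fix misreads the hypothesis, which would then refer to the prefix's own underlying graph).
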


    We make no attempt to optimize the absolute constants in the statements or proofs; they are chosen for convenience to simplify the estimates.
    %%%%%%%%%%%%%%%%%%%%%%%%%%%%%%%%%%%%%%%%%%%%%%
	\subsection{Proof outline}
    %%%%%%%%%%%%%%%%%%%%%%%%%%%%%%%%%%%%%%%%%%%%%%
    
    We briefly outline the main ideas behind the proofs of \cref{th:k-def-exploration} and \cref{th:k-def-exploration-alg}. Full details are deferred to \cref{sec:bound} and \cref{sec:efficient-computation} respectively.

    We start by outlining the proof of \cref{th:k-def-exploration}.
    \begin{enumerate}[label=\arabic*.]
        \item \textbf{Reduction to a DFS tour of a stable backbone.}  
        We fix a spanning tree $T$ from the statement of \cref{th:k-def-exploration}, and a depth-first search tour of $T$. Exploration of $\cG$ is reduced to covering this DFS tour.

        \item \textbf{Parallel exploration via the roundabout process.}
        We consider the \emph{roundabout exploration process}, in which we initially place one virtual agent at each position of the DFS tour.
        During this stage we only use snapshots that are $k$-edge-deficient with respect to $T$, skipping all other snapshots.
        The process proceeds in iterations, each consisting of two phases. In the \emph{movement phase}, agents advance along the DFS tour whenever the corresponding tree edges are available. In the subsequent \emph{elimination phase}, redundant agents, namely those whose explored portions of the tour are already covered by the remaining agents, are removed.
        The central invariant of the roundabout exploration process is that after each iteration, the union of the tour segments visited by the active agents covers the entire DFS tour, and hence all vertices of $\cG$.
        In \cref{sec:roundabout-exploration}, we show that after $O(n/k)$ iterations, the number of active agents drops to $O(k)$.

        \item \textbf{Iteration over multiple epochs and selection of agents.}  
        To convert the parallel exploration performed by multiple virtual agents into an exploration schedule for a single agent, we divide the timeline into consecutive epochs. Each epoch consists of a repositioning part of $\Delta$ time steps, followed by a roundabout part. At the end of each epoch, the roundabout part produces a set of $O(k)$ agents whose explored segments collectively cover the entire DFS tour.
        While no single agent covers the entire DFS tour within a single epoch, we show in \cref{sec:patroling-agents} that across a sequence of sufficiently many epochs, there exists a choice of one agent per epoch whose combined explored segments span the DFS tour.

        \item \textbf{Simulation by a single explorer.}  
        Having identified such agents, we show in \cref{sec:single-agent-exploration} how a single explorer can simulate their movements. 
        The $\Delta$-temporal connectivity property ensures that the explorer can reposition itself during the repositioning part between consecutive roundabout parts, allowing it to follow the exploration schedule of the chosen agents epoch by epoch.

        \item \textbf{Deriving the time bound.}  
        By choosing the number of epochs to be $\lceil 18k \ln{(6k)} \rceil$, we derive that $\lceil 18k \ln{(6k)} \rceil(\Delta+ \frac{n}{k})$ $k$-edge-deficient snapshots is enough to explore $\cG$, completing the proof of \cref{th:k-def-exploration}.
    \end{enumerate}

    Let us briefly contrast this strategy with the earlier $O(kn\log n)$ approach of
    Erlebach and Spooner~\cite{ES22}. Their algorithm also uses virtual agents, but organizes
    them around a balanced cover of the unexplored part of a spanning tree by $k+1$ subtrees.
    In each round, one virtual agent is placed in each subtree and attempts to follow a DFS tour of that subtree. Since at most $k$ edges are missing in any snapshot, some subtree is unblocked often enough for its agent to complete its tour; that subtree is then removed. This yields a geometric decrease in the number of unexplored tree edges and hence $O(k\log n)$ rounds.
    
    Our proof uses virtual agents in a different way. Instead of repeatedly decomposing the
    remaining subtrees, we use a single DFS tour of the entire spanning tree. In each epoch, the roundabout process eliminates redundant agents while preserving coverage of the whole tour, leaving only $O(k)$ representative agents after $O(n/k)$ steps.
    By combining $O(k\log k)$ such epochs, we construct a single-agent exploration schedule and thereby remove the dependence on $\log n$.

    To establish \cref{th:k-def-exploration-alg} we first show in \cref{sec:efficient-computation} that when the spanning tree $T$ is given, the proof of \cref{th:k-def-exploration} can be made algorithmic, i.e., the corresponding exploration schedule can be computed efficiently. 
    Then, in \cref{sec:efficient-computation}, we prove that, given a $k$-edge-deficient temporal graph $\cG$, one can efficiently compute a spanning tree $T$ such that at least half of the snapshots in $\cG$ are $2k$-edge-deficient with respect to $T$. 
    Combining the two components yields an efficient algorithm for computing an exploration schedule for $k$-edge-deficient temporal graphs.

	%%%%%%%%%%%%%%%%%%%%%%%%%%%%%
	\section{Exploration bound}
	\label{sec:bound}
	%%%%%%%%%%%%%%%%%%%%%%%%%%%%%

    This section proves \cref{th:k-def-exploration} by reducing temporal exploration to coverage of a DFS tour of the stable spanning tree $T$. The proof first analyses a multi-agent parallel exploration process on this tour, then repeats this process over several epochs, and finally uses $\Delta$-temporal connectivity to simulate selected virtual agents by a single explorer. We begin by fixing the notation used throughout the proof.

    Let $L$ be a sufficiently large natural number, and
	let $\cG=\langle G_1,G_2,\dots,G_L\rangle$ be a $\Delta$-temporally connected temporal graph on $n$ vertices with underlying
	graph $G=(V,E)$. Let $T$ be a spanning tree of $G$ and assume that $\cG$ contains at least $\lceil 18k \ln{(6k)} \rceil(\Delta+ \frac{n}{k})$ snapshots, that are $k$-edge-deficient with respect to $T$.
    Fix a root $r \in V$ in $T$ and a DFS tour of $T$ that starts and ends at $r$ and traverses
	each edge of $T$ exactly twice. Denote by
	\[
		(v_1, v_2, \ldots, v_N, v_{N+1}) \quad\text{with}\quad v_{N+1}=v_1=r
	\]
	the cyclic sequence of vertices along this tour, where $N:=2(n-1)$ is the number
	of tour edges. 
	Denote the tour edges by $e_i:=\{v_i,v_{i+1}\}$, $i \in N$, and note that each such edge is an edge of $T$.
	For indices $i, j \in [N]$, we define the \emph{circular interval}
	\[
		\cint{i}{j} :=
		\begin{cases}
			\{i, i+1, \dots, j\}, & \text{if } i \le j,\\[4pt]
			\{i, i+1, \dots, N, 1, 2, \dots, j\}, & \text{if } i > j.
		\end{cases}
	\]
	That is, $\cint{i}{j}$ consists of all indices encountered when moving forward from $i$ to $j$ along the cycle $(1,2, \ldots, N, 1)$. Similarly, we denote by $\coint{i}{j}$ the circular interval obtained from $\cint{i}{j}$ by excluding its right endpoint $j$.

	For each $i \in [N]$, we have an \emph{agent} $a_i$ whose \emph{state} at any time is an element of $[N]$, which reflects the current position of the agent along the DFS tour, i.e., if the state of an agent is equal to $j$, then the agent is positioned at vertex $v_j$. 
	We denote $A := \{a_1, a_2, \ldots, a_N \}$.
	
	%%%%%%%%%%%%%%%%%%%%%%%%%%%%	
	\subsection{Roundabout exploration process}\label{sec:roundabout-exploration}
	%%%%%%%%%%%%%%%%%%%%%%%%%%%%	
    The purpose of the \emph{roundabout exploration} process is to identify a small number of agents that explore the DFS tour in a short time interval.
    In the beginning, the $N$ agents start moving around the DFS tour like runners on a circular track. Although some agents may be blocked by missing tree edges, redundant runners are removed once their covered intervals are already covered by others.
    The main result of this subsection is that, after $\lfloor N/2k\rfloor$ steps, only
    $O(k)$ active agents remain.
    
    The roundabout exploration process can be executed on any sequence of at most $N$ snapshots each of which is
    $k$-edge-deficient with respect to $T$. Thus, for convenience, in this subsection we consider
    $\langle G_1,G_2,\ldots,G_N\rangle$ and assume that all snapshots in this temporal graph are
    $k$-edge-deficient with respect to $T$. The process runs for steps
    $t=1,2,\ldots,N$, and each step consists of the following two phases:
	\begin{enumerate}
		\item \textbf{Movement}. At step $t$, each agent $a_i$ attempts to move from its current state $j$ to the next state $j+1$ (i.e., from $v_j$ to $v_{j+1}$) along $e_j$.	It moves if and only if $e_j \in E(G_t)$; otherwise it stays put.
		More formally, for every $i \in [N]$, let $s_i(t)$ denote the state of agent $a_i$ after step $t$, and define $s_i(0) := i$. For $t \geq 1$, if $s_i(t-1) = q$, then
		\[
		s_i(t)=
		\begin{cases}
			(q \bmod N) + 1 & \text{if } e_q \in E(G_t),\\
			q   & \text{if } e_q\notin E(G_t).
		\end{cases}
		\]
		
		\item \textbf{Elimination}. Denote by $A(t)$ the set of \emph{active} agents after step $t$, and define $A(0) := A$.
		Also denote by $D_i(t)$ the set of states visited by agent $a_i$ up to time $t$; that is, $D_i(t)$ is the subset of $[N]$ corresponding to the circular interval
		$\cint{i}{s_i(t)}$, and $D_i(0) = \{i\}$.
		We say that an agent $a_i$  is \textit{redundant after step $t$} if the set $D_i(t)$ of its visited states after step $t$ is covered by the sets of visited states of the other active agents. We define $A(t)$ as a set of agents obtained from $A(t-1)$ by iteratively removing an arbitrary redundant agent until none remain. Thus, by the end of this phase we have that
		\[
			D_i(t) \not\subseteq \bigcup_{j \neq i,\; a_j \in A(t)} D_j(t),
		\]
		holds for every agent $a_i \in A(t)$.
	\end{enumerate}
	
	The following observation is a straightforward consequence of definitions.
	
	\begin{observation}\label{obs:all-states-visited}
		For every $t \in [N]$, $\bigcup_{a_i \in A(t)} D_i(t) = [N]$.
	\end{observation}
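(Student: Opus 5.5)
The plan is to prove the statement by induction on $t$, showing that the union of visited sets over active agents equals $[N]$ at every step $t \in \{0,1,\dots,N\}$. The case $t=0$ is immediate: $A(0)=A$ and $D_i(0)=\{i\}$, so the union is $\bigcup_{i\in[N]}\{i\}=[N]$.

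For the inductive step, assume $\bigcup_{a_i\in A(t-1)} D_i(t-1)=[N]$. Two effects must be checked, one for each phase of step $t$.

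\textbf{Movement phase.} Visited sets only grow. If $s_i(t-1)=q$, then $s_i(t)$ is either $q$ or the circular successor of $q$. Hence $D_i(t)=\cint{i}{s_i(t)}\supseteq \cint{i}{q}=D_i(t-1)$. One detail needs care: the interval should not wrap past $i$ and collapse. If an agent could travel all the way around, "the circular interval $\cint{i}{s_i(t)}$" would have to be read as the full set $[N]$ once the agent has returned to $i$. Since at most $N$ steps are taken, an agent starting at $i$ reaches $i$ again only at step $N$ if it moves every time. In that case it has visited all of $[N]$, and I will take $D_i$ to mean the set of states actually visited, so monotonicity holds. Consequently, before elimination,
\[
\bigcup_{a_i\in A(t-1)} D_i(t)\supseteq \bigcup_{a_i\in A(t-1)} D_i(t-1)=[N].
\]

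\textbf{Elimination phase.} Each single removal preserves the union. An agent $a_i$ is removed only if it is redundant with respect to the current active set, that is, $D_i(t)\subseteq\bigcup_{j\ne i} D_j(t)$ over the other currently active agents. Removing it therefore leaves the union unchanged. Since agents are removed one at a time and redundancy is re-evaluated against the current set after each removal, induction over the sequence of removals shows the union stays $[N]$ until $A(t)$ is reached. This completes the inductive step and gives the statement for every $t\in[N]$.

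The only real subtlety is the iterative nature of the elimination. Removing several agents that are simultaneously redundant with respect to the old set could destroy coverage, because two agents may each cover the other's states. The definition avoids this by removing one agent at a time and rechecking redundancy after each removal, and the proof relies on exactly that.
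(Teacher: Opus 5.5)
Your proof is correct and is exactly the argument the paper leaves implicit when it calls this a straightforward consequence of the definitions: coverage holds at $t=0$, the movement phase only enlarges each $D_i$, and removing redundant agents one at a time, with redundancy rechecked after each removal, preserves the union. Your wrap-around caveat is a fair reading of the verbal definition of $D_i(t)$ as the set of states actually visited. It only matters at $t=N$, and is moot for the $t=\lfloor N/2k\rfloor < N$ steps the paper actually uses.
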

	
	We now establish a number of useful facts about the roundabout exploration process.
	
	\begin{lemma}\label{obs:diff-states}
		For every $t \in [N]$, no two active agents are in the same state after step $t$.
	\end{lemma}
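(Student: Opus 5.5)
Suppose two distinct active agents $a_i$ and $a_j$ occupy the same state after step $t$; we show that one of them must have been eliminated. The approach is to prove a stronger invariant by induction on $t$: for all agents (active or not), if $s_i(t)=s_j(t)$ with $i\neq j$, then one of the visited intervals $D_i(t)=\cint{i}{s_i(t)}$ and $D_j(t)=\cint{j}{s_j(t)}$ contains the other. Such a containment makes one agent redundant. The elimination phase ends only when no active agent is redundant, so two active agents with equal states cannot both survive.

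First we record how states evolve. The movement rule depends only on the current state $q$ and the snapshot $G_t$. Hence two agents in the same state at time $t-1$ are in the same state at every later time; once merged, they stay merged forever. Second, we check that agents never overtake each other: at each step an agent advances by at most one position. So if $s_i(t)=s_j(t)$ but $s_i(t-1)\neq s_j(t-1)$, then exactly one of them, say $a_i$, was at state $q-1$ and moved to $q$, while $a_j$ was at $q$ and was blocked. (This uses that $e_q$ is missing while $e_{q-1}$ is present.) Both states are taken cyclically modulo $N$.

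The key step is to show that at the moment of merging, $D_i(t)\supseteq D_j(t)$ or $D_j(t)\supseteq D_i(t)$. Both intervals are circular intervals ending at the common state $q$. Two circular intervals with the same right endpoint are nested, except in the wrap-around case where one of them is all of $[N]$; that case is also a containment. So the nestedness holds automatically. After the merge, the two agents move in lockstep. Each new state is appended to both intervals, so the containment is preserved at all later steps. This is the invariant we need.

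Finally, we conclude using the elimination phase. If $a_i,a_j\in A(t)$ share a state, then WLOG $D_j(t)\subseteq D_i(t)\subseteq\bigcup_{l\neq j,\,a_l\in A(t)}D_l(t)$. So $a_j$ is redundant, which contradicts the termination condition stated in the definition of $A(t)$.

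The main obstacle is the careful handling of the circular intervals. Once an interval wraps past its own start, it already covers all of $[N]$, and the claim that "same right endpoint implies nested" must be checked there; the definition of $\cint{i}{s_i(t)}$ does not record a full lap. I would handle this by noting that $t\le N$. An agent starting at $i$ can then return to state $i$ only at $t=N$, and in that case its visited set is effectively all of $[N]$ and trivially contains the other interval. It may be cleaner to reinterpret $D_i(t)$ as the set of states actually visited, which agrees with the paper's definition for $t<N$ and is nested by the lockstep argument in all cases.
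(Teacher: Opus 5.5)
Your proof is correct and its core is exactly the paper's argument: the circular intervals $\cint{i}{q}$ and $\cint{j}{q}$ share the right endpoint $q$, so they are nested, and one of the two agents is therefore redundant with respect to $A(t)$, contradicting the termination of the elimination phase. The induction, no-overtaking and lockstep analysis are unnecessary, because this nestedness holds directly at time $t$ regardless of the agents' history. It also holds in the full-lap case you worry about, even under the paper's literal definition of $D_i(t)$.
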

	\begin{proof}
		Assume for a contradiction that after step $t\in[N]$ there exist two distinct active agents $a_i,a_j\in A(t)$ with the same state $s_i(t)=s_j(t)=q$. Then, by definition,
		\[
			D_i(t)=\cint{i}{q}
			\qquad\text{and}\qquad
			D_j(t)=\cint{j}{q}.
		\]
		Since, both sets are circular intervals that end at the same point $q$, they are nested, i.e., one contains the other. Hence one of the two agents is redundant after the movement phase at step~$t$ and would have been removed in the elimination phase, a contradiction to both agents being active after step $t$.
	\end{proof}
	
	\begin{lemma}\label{obs:1-state-3-agents}
		For every $t \in [N]$, no state has been visited by three agents that are active after step $t$, i.e., there are no distinct agents $a_i, a_j, a_k \in A(t)$ such that $D_i(t) \cap D_j(t) \cap D_k(t) \neq \emptyset$.
	\end{lemma}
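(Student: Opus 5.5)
Suppose, for contradiction, that after step $t$ there are three distinct active agents $a_i,a_j,a_k\in A(t)$ and a common state $q\in D_i(t)\cap D_j(t)\cap D_k(t)$. The plan is to show that one of the three intervals is contained in the union of the other two. That agent would then be redundant after step $t$, contradicting the fact that the elimination phase stops only when no redundant agent remains.

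First I will use the geometry of the intervals. Each visited set is a circular interval $D_x(t)=\cint{x}{s_x(t)}$ with $x\in\{i,j,k\}$. By \cref{obs:diff-states}, the three end states $s_i(t),s_j(t),s_k(t)$ are distinct. The start points $i,j,k$ are distinct because the agents are distinct. Since each interval contains $q$, it splits at $q$ into a backward part $\cint{x}{q}$ and a forward part $\cint{q}{s_x(t)}$. Both parts are walks away from $q$ along the cycle, so the backward parts are nested and the forward parts are nested.

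Next I pick witnesses. Let $a_\alpha$ be the agent whose interval reaches farthest back from $q$, i.e.\ whose start point is first encountered when moving backward from $q$; then $\cint{\alpha}{q}$ contains all three backward parts. Let $a_\beta$ be the agent whose interval reaches farthest forward, so $\cint{q}{s_\beta(t)}$ contains all three forward parts. Among three agents there is some $a_\gamma$ with $\gamma\notin\{\alpha,\beta\}$, whether or not $\alpha=\beta$. Then
\[
D_\gamma(t)=\cint{\gamma}{q}\cup\cint{q}{s_\gamma(t)}\subseteq \cint{\alpha}{q}\cup\cint{q}{s_\beta(t)}\subseteq D_\alpha(t)\cup D_\beta(t).
\]
So $a_\gamma$ is redundant after step $t$, which is the desired contradiction.

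The main obstacle is making the nesting claim rigorous on the cycle, because an interval might wrap around and cover nearly everything. I would first argue that no active interval is all of $[N]$ while another agent is active, since otherwise the other agent would be redundant. I also need $t\le N$ so that no interval self-overlaps; each agent makes at most one move per step, so $|D_x(t)|\le t+1$, and at $t=N$ a full wrap is only possible when a single agent remains. With that in place, measuring the backward and forward distances from $q$ as integers in $[0,N)$ turns the nesting into a comparison of numbers, and the argument above goes through.
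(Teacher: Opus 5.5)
Your proof is correct and follows essentially the same route as the paper: split each interval at the common state $q$ into nested backward and forward parts, then show that one of the three agents is covered by the other two and is therefore redundant. The differences are minor. Choosing the farthest-back and farthest-forward agents directly lets you skip the paper's step of using pairwise incomparability to reverse the order of the forward parts, and you explicitly handle the wrap-around case at $t=N$, which the paper leaves implicit.
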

	\begin{proof}
		Suppose, towards a contradiction, that there exists $q \in D_i(t) \cap D_j(t) \cap D_k(t)$. Without loss of generality, assume 
        \begin{equation}\label{eq:k-j-i}
            \cint{k}{q} \subset \cint{j}{q} \subset \cint{i}{q}.
        \end{equation}
		Since the circular intervals $\cint{i}{s_i(t)}$, $\cint{j}{s_j(t)}$, $\cint{k}{s_k(t)}$ are pairwise incomparable (due to the non-redundancy of the agents) and their common intersection contains $q$, we derive from \cref{eq:k-j-i} that 
		\[
			\cint{q}{s_i(t)} \subset \cint{q}{s_j(t)} \subset \cint{q}{s_k(t)}.
		\]
		This implies that $\cint{j}{s_j(t)} \subseteq \cint{i}{q} \cup \cint{q}{s_k(t)} \subseteq \cint{i}{s_i(t)} \cup \cint{k}{s_k(t)}$, which is not possible as $a_j$ is non-redundant after step $t$. 
	\end{proof}
	
	\begin{lemma}\label{lem:sum-of-states}
		For every $t \in [N]$, we have 
		\[
		\sum_{a_i \in A(t)} |D_i(t)| \leq 2N-|A(t)|.
		\]
	\end{lemma}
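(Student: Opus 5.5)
The plan is a double count. Sum, over every state $q \in [N]$, the number of active agents whose visited set contains $q$:
\[
\sum_{a_i \in A(t)} |D_i(t)| \;=\; \sum_{q\in[N]} c_q, \qquad c_q := \bigl|\{a_i \in A(t) : q \in D_i(t)\}\bigr|.
\]
By \cref{obs:1-state-3-agents} every $c_q \le 2$, and by \cref{obs:all-states-visited} every $c_q \ge 1$. So the sum equals $N + |\{q : c_q = 2\}|$. It therefore suffices to show that at most $N - |A(t)|$ states are covered exactly twice. Equivalently, at least $|A(t)|$ states are covered exactly once.

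To produce these singly covered states, assign one distinct state to each active agent $a_i$. The natural candidate is its current position $s_i(t)$, the right endpoint of the circular interval $D_i(t)=\cint{i}{s_i(t)}$. By \cref{obs:diff-states} these endpoints are pairwise distinct. The task is then to show that no other active agent contains $s_i(t)$.

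Suppose some other active agent $a_j$ had $s_i(t) \in D_j(t)$. Since $D_i(t)$ ends at $s_i(t)$, the interval $\cint{s_i(t)}{s_j(t)}$ would lie inside $D_j(t)$. Non-redundancy of $a_i$ means $D_i(t)$ is not contained in the union of the other active intervals. I would combine this with the structure of the process: all agents move forward, start at distinct positions $i$, and at each step the same edge availability applies to agents at the same state.

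I expect the fact that $s_i(t)$ is covered only by $a_i$ not to hold in general. An agent ahead can overlap the endpoint of an agent behind, since $a_j$ may have started earlier on the cycle and passed $s_i(t)$. So I would switch the choice to the starting point $i$, the left endpoint of $D_i(t)$. The key step is then a claim: if another active $a_j$ contains $i$, then $a_j$ started strictly before $i$ (circularly) and currently sits at or beyond $i$. If two agents covered some state twice, an ordering argument would follow.

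Cleaner still, and what I would actually try, is the counting below. Order the active agents circularly by starting point, $i_1,\dots,i_m$. Pairwise incomparability of circular intervals forces the endpoints to appear in the same cyclic order. Combined with \cref{obs:1-state-3-agents}, only consecutive intervals in this order can overlap. The overlap of $D_{i_\ell}$ and $D_{i_{\ell+1}}$ is $\cint{i_{\ell+1}}{s_{i_\ell}(t)}$.

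Since $a_{i_{\ell+1}}$ is non-redundant, it must have a state not covered by its neighbours. Hence the overlap with its predecessor and the overlap with its successor are disjoint and leave at least one singly covered state inside $D_{i_{\ell+1}}$. This gives at least $m=|A(t)|$ singly covered states, and the bound follows.

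The main obstacle is making the cyclic-order claim rigorous, in particular wraparound and the case $m\le 2$, where an interval may overlap the same neighbour on both sides. Those small cases I would check directly.
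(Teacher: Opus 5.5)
Your argument is essentially the paper's: you double count to reduce the claim to showing that at least $|A(t)|$ states are covered by exactly one active agent, and you use non-redundancy to supply one such state per agent. The cyclic-order step you flag as the main obstacle (including wraparound and the case $m\le 2$) is unnecessary. Non-redundancy as defined already gives a state of $D_i(t)$ outside $\bigcup_{j\neq i,\,a_j\in A(t)} D_j(t)$, that is, outside \emph{all} other active intervals and not just the neighbours'. Such a state is singly covered, and private states of distinct agents are automatically distinct.
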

	\begin{proof}
		It follows from \cref{obs:all-states-visited} and \cref{obs:1-state-3-agents} that, by the end of step $t$, every state is visited either by one or two agents in $A(t)$. 
		Denote by $Q_1, Q_2 \subseteq [N]$ the sets of states of the first and the second kind respectively. Then
		\begin{equation}\label{eq:sum-states}
			\sum_{a_i \in A(t)} |D_i(t)|  = 
			|Q_1| + 2 \cdot |Q_2| =
			|Q_1| + 2 \cdot (N - |Q_1|) = 2N - |Q_1|.
		\end{equation}
		Observe that for every agent in $a_i \in A(t)$ there exists a state $q \in D_i(t)$ that is visited only by that agent, i.e., $q \not\in D_j(t)$ for every $a_j \in A(t)$ with $j \neq i$; indeed, otherwise $a_i$ would be redundant. This implies $|Q_1| \geq |A(t)|$, which together with \cref{eq:sum-states} gives the desired bound.
	\end{proof}

	\begin{lemma}\label{lem:agents-after-N-steps}
		Let $t \in [N]$ and $N \geq r \geq 2k+1$.
		If $r$ agents remain active after $t$ steps, i.e., $|A(t)| \geq r$, then 
		\[
		t \leq \frac{2N-r}{r-2k}.
		\] 
	\end{lemma}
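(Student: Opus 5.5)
We plan a potential argument: lower-bound the total visited length $\sum_{a_i\in A(t)}|D_i(t)|$ in terms of how many moves the surviving agents have made, and compare it with the upper bound $2N-|A(t)|$ from \cref{lem:sum-of-states}.

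\textbf{Step 1: counting moves.} Fix an agent $a_i\in A(t)$. Since $D_i(t)=\cint{i}{s_i(t)}$, we have $|D_i(t)| = 1 + m_i(t)$, where $m_i(t)$ is the number of successful moves of $a_i$ during steps $1,\dots,t$. This holds because $t\le N$, so no agent can wrap fully around the tour of $N$ edges and still be active.

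Agents that are active after step $t$ were also active at every earlier step, since $A(\cdot)$ is non-increasing. Hence $A(t)\subseteq A(t'-1)$ for all $t'\le t$.

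\textbf{Step 2: bounding blocked agents per step.} Consider a step $t'\le t$. Snapshot $G_{t'}$ is $k$-edge-deficient with respect to $T$, so at most $k$ tree edges are missing. Each tree edge occurs exactly twice among the tour edges $e_1,\dots,e_N$, so at most $2k$ tour indices $q$ have $e_q\notin E(G_{t'})$.

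By \cref{obs:diff-states} applied after step $t'-1$, the active agents occupy pairwise distinct states. For $t'=1$ this holds trivially since $s_i(0)=i$. Therefore at most $2k$ agents of $A(t'-1)$, and in particular at most $2k$ agents of $A(t)$, fail to move at step $t'$.

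\textbf{Step 3: summing.} Summing over $t'=1,\dots,t$ gives
\[
\sum_{a_i\in A(t)} m_i(t) \ge t\bigl(|A(t)|-2k\bigr).
\]
Combining this with \cref{lem:sum-of-states} yields
\[
|A(t)| + t\bigl(|A(t)|-2k\bigr) \le \sum_{a_i\in A(t)}|D_i(t)| \le 2N-|A(t)|,
\]
so $t\bigl(|A(t)|-2k\bigr)\le 2N-2|A(t)|$.

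\textbf{Step 4: replacing $|A(t)|$ by $r$.} Since $|A(t)|\ge r\ge 2k+1$, we get
\[
t \le \frac{2N-2|A(t)|}{|A(t)|-2k}.
\]
The right-hand side is decreasing in $|A(t)|$ on the range $|A(t)|>2k$: its numerator decreases and its denominator increases. It is therefore at most $\frac{2N-2r}{r-2k}\le\frac{2N-r}{r-2k}$, as claimed. If $|A(t)|>N$ the situation is vacuous, since $|A(t)|\le N$ always.

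\textbf{Main obstacle.} The delicate points are two. First, the identity $|D_i(t)|=1+m_i(t)$ must not break because of wrap-around; this is guaranteed by $t\le N$. Second, the bound of $2k$ blocked agents relies on distinct states, which is exactly where \cref{obs:diff-states} is used. The remaining steps are routine algebra.
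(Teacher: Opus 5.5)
Your proof is correct and follows the paper's argument: you count the moves of the surviving agents, use $k$-edge-deficiency together with \cref{obs:diff-states} to show that at most $2k$ of them are blocked per step, and compare with \cref{lem:sum-of-states}; the only difference is that you keep the extra $|A(t)|$ contributed by the starting states, which gives the slightly sharper $t(|A(t)|-2k)\le 2N-2|A(t)|$. One small fix: $t\le N$ alone does not rule out a full wrap-around when $t=N$, and what actually rules it out is that an active agent that has visited all of $[N]$ would make every other active agent redundant, contradicting $|A(t)|\ge r\ge 3$.
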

	\begin{proof}
		Suppose $|A(t)| \geq r$. Then, by \cref{lem:sum-of-states}, we have
		\begin{equation}\label{eq:sum-of-states-by-time-t}
			\sum_{a_i \in A(t)} |D_i(t)| \leq 2N - r.
		\end{equation}
		On the other hand, at each of the $t$ steps, due to the $k$-edge-deficiency and \cref{obs:diff-states}, 
        %TODO (VZ): why do we need \cref{obs:diff-states} here?
        every agent in $A(t)$, except possibly $2k$ of them, makes a move and thus extends their set of visited states by one element. Thus, at every step each of the moving agents contributes 1 to the sum in \cref{eq:sum-of-states-by-time-t}, and therefore 
		\[
			t(r-2k) \leq 
            \sum_{a_i \in A(t)} |D_i(t)|.
		\] 
		Combining this with \cref{eq:sum-of-states-by-time-t}, we obtain $t(r-2k) \leq 2N - r$, which implies the desired inequality.
	\end{proof}
	
	\begin{corollary}\label{cor:active-agents}
		Let $t = \left\lfloor \frac{N}{2k} \right\rfloor$. Then, at most $6k$ agents remain active after $t$ steps, i.e., $|A(t)| \leq 6k$.
	\end{corollary}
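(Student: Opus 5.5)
We argue by contradiction using \cref{lem:agents-after-N-steps}. Let $t = \lfloor N/(2k) \rfloor$ and suppose $|A(t)| \ge 6k+1$; set $r := 6k+1$. Then $r \ge 2k+1$. We also need $r \le N$. If $N < 6k+1$, the claim is trivial, because there are only $N$ agents in total and hence $|A(t)| \le N \le 6k$. We may therefore assume $N \ge 6k+1$, in which case $t \ge 1$, so $t \in [N]$ and the lemma applies.

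By \cref{lem:agents-after-N-steps},
\[
t \le \frac{2N - r}{r - 2k} = \frac{2N - 6k - 1}{4k+1} < \frac{2N}{4k} = \frac{N}{2k}.
\]
This strict inequality alone does not contradict $t = \lfloor N/(2k)\rfloor$, so we need a sharper comparison with the floor. Write $N = 2k m + \rho$ with $0 \le \rho < 2k$, so that $t = m$. We must show
\[
m > \frac{2N - 6k - 1}{4k+1}, \qquad\text{that is,}\qquad m(4k+1) > 4km + 2\rho - 6k - 1,
\]
which simplifies to $m > 2\rho - 6k - 1$. Since $\rho \le 2k-1$, the right-hand side satisfies $2\rho - 6k - 1 \le 4k - 2 - 6k - 1 < 0$. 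Because $m \ge 1$, the inequality holds.

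Hence $t > (2N-r)/(r-2k)$, which contradicts \cref{lem:agents-after-N-steps}. Therefore $|A(t)| \le 6k$.

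The only delicate point is handling the floor together with the small-$N$ boundary case so that the hypotheses $r \le N$ and $t \in [N]$ of the lemma are satisfied. The remaining steps are arithmetic.
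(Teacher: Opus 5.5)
Your proof is correct and follows the same route as the paper: it handles the case $N \le 6k$ trivially and otherwise applies \cref{lem:agents-after-N-steps} to contradict $t=\lfloor N/(2k)\rfloor$. The only difference is cosmetic: the paper takes $r=6k$, which gives $t \le \frac{N}{2k}-\frac{3}{2} < \lfloor N/(2k)\rfloor$ directly and so avoids your division-with-remainder step (note also that the paper already uses $\rho$ for $\lceil 18k\ln(6k)\rceil$).
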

	\begin{proof}
        If $6k > N$, then the statement is obviously true as $|A(t)|$ never exceeds $N$.
		Otherwise, towards a contradiction, suppose $|A(t)| > 6k$. Then, by \cref{lem:agents-after-N-steps} with $r=6k$,
		\[
		t \leq \frac{2N-6k}{4k} = \frac{N}{2k} - \frac{3}{2} < \left\lfloor \frac{N}{2k} \right\rfloor = t,
		\]
		a contradiction.
	\end{proof}

    The next lemma shows that each active agent is responsible for covering the interval up to the next active agent on the DFS tour. It follows from \cref{obs:all-states-visited} and the fact that the relative order of the active agents along the DFS tour does not change throughout the process.
	
	\begin{lemma}\label{cl:interval-cover}
		Let $t \in [N]$ and let $s = |A(t)|$. Let $1 \leq i_1 < i_2 \cdots < i_s \leq  N$ be the initial states of the agents in $A(t)$.
		Then, for every $\ell \in [s]$, by the end of time step $t$ agent $a_{i_{\ell}}$ visits all states between its initial state and the state preceding the initial state of the next agent, i.e., we have $\coint{i_{\ell}}{i_{\ell+1}} \subseteq D_{i_{\ell}}(t)$  for every ${\ell} \in [s-1]$, and $\coint{i_s}{i_1} \subseteq D_{i_s}(t)$.
	\end{lemma}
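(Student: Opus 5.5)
We will prove the statement by combining \cref{obs:all-states-visited} with an order-preservation property of the agents along the tour. The plan has three steps.

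\textbf{Step 1: agents never overtake one another.} For any two agents $a_i,a_j$ that are active after step $t$, we claim that agent $a_j$ has never been inside $\coint{i}{s_i(\tau)}$ in a way that passes $a_i$. More concretely, we show by induction on $\tau\le t$ that the cyclic order of the current states $s_{i_1}(\tau),\dots,s_{i_s}(\tau)$ agrees with the cyclic order of the initial states $i_1<\dots<i_s$. Moreover, no agent's current state ever leaves its forward segment up to the next agent's current state. The inductive step rests on one fact: in a single step every agent advances by at most one position. If $a_{i_\ell}$ would pass $a_{i_{\ell+1}}$, it would first have to occupy the same state. So it suffices to check that two agents active after step $t$ never share a state at any time $\tau\le t$. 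If they shared a state $q$ at time $\tau$, then from then on both face exactly the same edges, because movement depends only on the current state and on $G_{\tau'}$. Hence they remain in the same state forever after, which contradicts \cref{obs:diff-states} at time $t$. (This coupling argument is the cleanest point; we must be careful that it applies to agents active at the later time $t$, since being active at $t$ implies being active at every earlier time.)

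\textbf{Step 2: how the visited intervals sit relative to each other.} From Step 1, the current positions appear in the cyclic order $s_{i_1}(t),\dots,s_{i_s}(t)$, each lying in $\cint{i_\ell}{\,\cdot\,}$ in the same order. Write $D_{i_\ell}(t)=\cint{i_\ell}{s_{i_\ell}(t)}$. Because $a_{i_{\ell+1}}$ started at $i_{\ell+1}$ and was never passed, the agent $a_{i_\ell}$ cannot go beyond $a_{i_{\ell+1}}$'s current state. The set $D_{i_\ell}(t)$ is therefore the arc starting at $i_\ell$ and contained in $\coint{i_\ell}{s_{i_{\ell+1}}(t)}$, ignoring the degenerate case $s=1$, which we handle separately.

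\textbf{Step 3: the coverage argument.} Take a state $q\in\coint{i_\ell}{i_{\ell+1}}$. By \cref{obs:all-states-visited}, $q\in D_{i_m}(t)$ for some active agent $a_{i_m}$. We want to show that $m=\ell$. Suppose instead $m\ne\ell$. The arc $D_{i_m}(t)$ starts at $i_m$, which lies outside $\coint{i_\ell}{i_{\ell+1}}$, and contains $q$. It must therefore contain the whole stretch from $i_m$ through $i_\ell$ up to $q$. In particular it contains $i_{\ell}$, the starting point of $a_{i_\ell}$, so agent $a_{i_m}$ passed position $i_\ell$. But by Step 1 the agent $a_{i_\ell}$ was always ahead of $a_{i_m}$ in the cyclic order, so $a_{i_m}$ would have overtaken it, a contradiction. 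Hence $q\in D_{i_\ell}(t)$. The case $s=1$ is immediate, since then $D_{i_1}(t)=[N]$ by \cref{obs:all-states-visited}.

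We expect the main obstacle to be stating the non-overtaking invariant rigorously on the cycle. The difficulty is the wraparound: for example, $a_{i_m}$ could in principle go all the way around. This is ruled out because $t\le N$ and because the other agents are blocking, but the precise formulation has to handle it. The coupling argument in Step 1 is the key tool for doing so.
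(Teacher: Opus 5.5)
Your ingredients match the paper's: coverage from \cref{obs:all-states-visited}, preserved relative order because active agents never share a state, and the fact that $a_{i_\ell}$ is the only active agent starting in $\coint{i_\ell}{i_{\ell+1}}$. Step~1 is correct, and the coupling is not even needed: $A(t)\subseteq A(\tau)$, so \cref{obs:diff-states} applies directly at every step $\tau\le t$.

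The gap is in Step~3. You try to show that every active agent whose visited set contains $q$ is $a_{i_\ell}$, and you get a ``contradiction'' from the fact that $a_{i_m}$ passed position $i_\ell$. Passing the \emph{initial} state of $a_{i_\ell}$ is not overtaking $a_{i_\ell}$, which may have moved on long before. Moreover, the target itself is false: states of $\coint{i_\ell}{i_{\ell+1}}$ are often visited by two active agents (the set $Q_2$ in \cref{lem:sum-of-states}).

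For example, take the following run.
Let $n=3$, $k=1$, and let $T$ be the path $v_1v_2v_3$ rooted at $v_1$, so the tour is $v_1,v_2,v_3,v_2$ and $N=4$.
Let $\{v_2,v_3\}$ be missing at step~$1$ and nothing be missing at step~$2$.
Let the elimination phases remove $a_2$ at step~$1$ and $a_3$ at step~$2$ (both are legal choices).
Then $A(2)=\{a_1,a_4\}$, $D_1(2)=\{1,2,3\}$ and $D_4(2)=\{4,1,2\}$. The state $2\in\coint{1}{4}$ is visited by $a_4$, which has moved past position $1$. Yet $a_4$ never overtook $a_1$: their states after steps $0,1,2$ are $(1,4)$, $(2,1)$, $(3,2)$. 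So no contradiction arises, and $q\in D_{i_\ell}(t)$ is left unproved. The same run also refutes the inclusion $D_{i_\ell}(t)\subseteq\coint{i_\ell}{s_{i_{\ell+1}}(t)}$ claimed in Step~2, since $\coint{1}{s_4(2)}=\{1\}$; fortunately you do not need it.

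The missing point is to compare $a_{i_m}$ with the \emph{current} position of $a_{i_\ell}$. This requires starting the contradiction from the right hypothesis.
Assume $s\ge 2$ and $q\in\coint{i_\ell}{i_{\ell+1}}\setminus D_{i_\ell}(t)$. Then $a_{i_\ell}$ stays inside $\coint{i_\ell}{q}$ throughout the process. By \cref{obs:all-states-visited}, some other agent $a_{i_m}\in A(t)$ visits $q$.

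Since $a_{i_m}$ starts outside $\cint{i_\ell}{q}$, it must enter this segment at $i_\ell$, coming from the preceding state, and then advance one state at a time to $q$. Within this linear segment it therefore goes from being behind $a_{i_\ell}$ to being strictly ahead of it. Both agents advance by at most one state per step, so they occupy the same state after some step $\tau\le t$. Since $a_{i_\ell},a_{i_m}\in A(t)\subseteq A(\tau)$, this contradicts \cref{obs:diff-states}.

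This is what the paper's short proof compresses into ``relative order does not change'' plus ``no other agent starts in the interval''. Arguing inside the non-wrapping segment $\cint{i_\ell}{q}$, rather than with cyclic orders, also removes the wrap-around difficulty you flagged.
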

	\begin{proof}
		By \cref{obs:all-states-visited}, the sets $D_{i_{\ell}}(t), \ell \in [s]$ cover $[N]$. \cref{obs:diff-states} implies that during the roundabout exploration process relative order of the agents $a_{i_1}, a_{i_2}, \ldots, a_{i_s}$ along the DFS tour does not change. Furthermore, for every $\ell \in [s-1]$ no agent starts in $\coint{i_{\ell}}{i_{\ell+1}}$ except $a_{i_{\ell}}$, and no agent except $a_{i_s}$ starts in $\coint{i_s}{i_1}$. These imply that $\coint{i_{\ell}}{i_{\ell+1}} \subseteq D_{i_{\ell}}(t)$ for $\ell \in [s-1]$, and $\coint{i_s}{i_{1}} \subseteq D_{i_s}(t)$ as claimed.
	\end{proof}

	%%%%%%%%%%%%%%%%%%%%%%%%%%%%	
	\subsection{Exploration by multiple agents}\label{sec:patroling-agents}
	%%%%%%%%%%%%%%%%%%%%%%%%%%%%	

	We now consider the execution of multiple instances of the roundabout exploration process on disjoint time intervals and show that there is a choice of one agent from each execution so that the combined exploration schedules of the chosen agents cover all states of the DFS tour and thus all vertices of $\cG$. In the next section, we will turn such a distributed exploration schedule to a single-agent schedule.
	
	For the rest of the paper we fix $t := \left\lfloor \frac{N}{2k} \right\rfloor$ and $\rho := \lceil 18k \ln{(6k)} \rceil$.

    Recall that $\cG$ contains at least $\rho(\Delta+ \frac{n}{k})$ snapshots that are $k$-edge-deficient with respect to $T$.
    We partition the initial part of the timeline $[L]$ into $\rho$ time intervals, which we call \emph{epochs}, such that  within every epoch $\cG$ contains $(\Delta+ \frac{n}{k}) \geq \Delta + t$ snapshots that are $k$-edge-deficient  with respect to $T$.
    Each epoch is partitioned into two parts: the \emph{repositioning part} consisting of the first $\Delta$ time steps,
    followed by the \emph{roundabout part} consisting of the remaining time steps. By our assumption, in each roundabout part $\cG$ contains at least $t$ (not necessarily consecutive) $k$-edge-deficient snapshots.
    These snapshots are used to  execute the roundabout exploration process, assuming that
    at the beginning of the roundabout part every agent $a_i$ is positioned at state $i$.
    The repositioning part of each epoch will later be used to move the explorer to the
    appropriate starting position.

    For each $i \in [\rho]$, let $A^{(i)} \subseteq A$ be the set of agents that remained active after the $t$-th step of the roundabout process in the $i$-th epoch, and denote by $S^{(i)}$ the set of initial states of these agents. By \cref{cor:active-agents}, $|A^{(i)}| = |S^{(i)}| \leq 6k$.
    Let
    \[
        M = \bigcup_{i \in [\rho]} S^{(i)} = \{ m_1, m_2, \ldots, m_d \},
    \]
    where $1 \leq m_1 < m_2 < \cdots < m_d \leq N$, and note that $d \leq  6k\rho$.
    Denote $I_i = \coint{m_i}{m_{i+1}}$,
    for $i \in [d-1]$, $I_d = \coint{m_d}{m_{1}}$, and $\bfI = \{ I_1, I_2, \ldots, I_d \}$.
    
    \begin{lemma}\label{lem:visited-interval}
        For each $i \in [\rho]$ and $I \in \bfI$,  there exists an agent in $A^{(i)}$ that visits the entire interval $I$ during the $i$-th epoch.
    \end{lemma}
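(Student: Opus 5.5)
The plan is to reduce the statement to \cref{cl:interval-cover} applied to the $i$-th epoch. Fix $i \in [\rho]$ and $I = \coint{m_j}{m_{j+1}} \in \bfI$, with indices taken cyclically so that $m_{d+1} = m_1$. List the initial states of the agents in $A^{(i)}$ as $i_1 < i_2 < \dots < i_s$, so that $S^{(i)} = \{i_1,\dots,i_s\}$. Since $S^{(i)} \subseteq M$ and the $m_j$ are all the elements of $M$ in cyclic order, the consecutive elements $m_j, m_{j+1}$ of $M$ cannot straddle more than one point of $S^{(i)}$.

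More precisely, I would take $i_\ell$ to be the last element of $S^{(i)}$ met when moving backwards cyclically from $m_j$. In other words, $i_\ell$ is the unique element of $S^{(i)}$ with $m_j \in \coint{i_\ell}{i_{\ell+1}}$, with the convention $i_{s+1} = i_1$. Such an $\ell$ exists because the intervals $\coint{i_\ell}{i_{\ell+1}}$, $\ell \in [s]$, partition $[N]$ when $s \ge 1$. Note that $s \ge 1$ holds since $A(t)$ covers $[N]$ by \cref{obs:all-states-visited}.

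The key step is to show $I \subseteq \coint{i_\ell}{i_{\ell+1}}$. Suppose not. Then moving forward from $m_j$ we reach $i_{\ell+1}$ strictly before $m_{j+1}$, that is, $i_{\ell+1} \in \coint{m_j}{m_{j+1}}$ and $i_{\ell+1} \neq m_j$. But $i_{\ell+1} \in S^{(i)} \subseteq M$, so an element of $M$ would lie strictly between the cyclically consecutive elements $m_j$ and $m_{j+1}$, a contradiction.

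Two degenerate cases need a short separate check. If $s = 1$, the interval $\coint{i_1}{i_1}$ should be read as all of $[N]$. If $d = 1$, then $I$ is all of $[N]$ and $S^{(i)} = \{m_1\}$. Both are handled by reading the conventions of \cref{cl:interval-cover} in the wrap-around case.

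Finally, \cref{cl:interval-cover}, applied to the roundabout process of the $i$-th epoch at step $t$, gives $\coint{i_\ell}{i_{\ell+1}} \subseteq D_{i_\ell}(t)$. Hence agent $a_{i_\ell} \in A^{(i)}$ visits every state of $I$ during the roundabout part of epoch $i$, and therefore during the $i$-th epoch.

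The only real subtlety, and the step I expect to need the most care, is the cyclic bookkeeping: making sure that "consecutive in $M$" correctly implies containment in a single gap of $S^{(i)}$ when intervals wrap around $N$. I would handle this by arguing entirely with circular intervals instead of splitting into linear cases.
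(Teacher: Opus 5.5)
Your proof is correct and follows the same route as the paper: since $S^{(i)} \subseteq M$, each gap $\coint{m_j}{m_{j+1}}$ of $M$ lies inside a single gap $\coint{p}{q}$ between cyclically consecutive elements of $S^{(i)}$, and \cref{cl:interval-cover} then shows that $a_p$ covers it. You simply spell out the cyclic bookkeeping that the paper compresses into ``common refinement.'' Two small remarks: $i_\ell$ is the \emph{first}, not the last, element of $S^{(i)}$ met when moving backwards from $m_j$ (counting $m_j$ itself), though your ``in other words'' definition is the right one; and your reading of $\coint{i}{i}$ as $[N]$ when $s=1$ or $d=1$ correctly handles a degenerate case that the paper's literal definition (which makes $\coint{i}{i}$ empty) leaves implicit.
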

    \begin{proof}
        Since the intervals in $\bfI$ are obtained by taking the common refinement of the families of intervals induced by the sets $S^{(i)}$ for $i \in [\rho]$, it follows that for every $i \in [\rho]$ and every interval $I \in \bfI$, there exist two consecutive states $p, q \in S^{(i)}$ such that $I \subseteq \coint{p}{q}$. Therefore, by \cref{cl:interval-cover}, the agent $a_p$ visits all states in $I$ during the $i$-th epoch.
    \end{proof}

    Let $I \in \bfI$. We say that a tuple $(s_1, s_2, \ldots, s_{\rho}) \in S^{(1)} \times S^{(2)} \times \cdots \times S^{(\rho)}$ is \emph{$I$-missing} if for every $i \in [\rho]$, the agent in $A^{(i)}$ with the initial state $s_i$ does not cover $I$ entirely in the $i$-th epoch. In other words, none of the agents with starting states $s_1, s_2, \ldots, s_{\rho}$ in epochs $1,2, \ldots, \rho$, respectively, covers $I$ in their epoch. We say that $(s_1, s_2, \ldots, s_{\rho})$ is \emph{$I$-covering}, if it is not $I$-missing. 
    
\begin{lemma}\label{lem:I-covering}
   At least $\frac{1}{12}$ of the tuples from $S^{(1)} \times S^{(2)} \times \cdots \times S^{(\rho)}$ are $I$-covering for every $I \in \bfI$.
\end{lemma}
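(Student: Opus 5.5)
The plan is a first-moment (union bound) argument over a uniformly random tuple. I read the statement as: at least $\frac{1}{12}$ of the tuples are simultaneously $I$-covering for all $I \in \bfI$. This is what the later single-agent construction needs.

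\textbf{Step 1: one interval, one epoch.} Pick $(s_1,\dots,s_\rho)$ uniformly at random from $S^{(1)}\times\cdots\times S^{(\rho)}$, so the coordinates are independent and each $s_i$ is uniform on $S^{(i)}$. Fix $I\in\bfI$ and an epoch $i$. By \cref{lem:visited-interval}, at least one agent of $A^{(i)}$ covers $I$ in epoch $i$. Since $|S^{(i)}|\le 6k$ by \cref{cor:active-agents}, the probability that the agent with initial state $s_i$ fails to cover $I$ is at most $1-\frac{1}{|S^{(i)}|}\le 1-\frac{1}{6k}$.

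\textbf{Step 2: one interval, all epochs.} By independence across epochs,
\[
\Pr[(s_1,\dots,s_\rho)\text{ is } I\text{-missing}] \le \Bigl(1-\tfrac{1}{6k}\Bigr)^{\rho} \le e^{-\rho/(6k)} \le e^{-3\ln(6k)} = (6k)^{-3},
\]
using $\rho\ge 18k\ln(6k)$.

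\textbf{Step 3: union bound over $\bfI$.} There are $|\bfI| = d \le 6k\rho$ intervals. Hence the probability that the tuple is $I$-missing for some $I$ is at most
\[
\frac{6k\rho}{(6k)^3} = \frac{\rho}{36k^2}.
\]
It remains to show $\frac{\rho}{36k^2}\le \frac{11}{12}$; the lemma then follows.

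\textbf{Step 4: the numerical check (main obstacle).} The ceiling in $\rho$ and the crude constants make $k=1$ tight, so I split into two cases.
\begin{itemize}
\item For $k\ge 2$, use $\rho\le 18k\ln(6k)+1$. This gives $\frac{\rho}{36k^2}\le \frac{\ln(6k)}{2k}+\frac{1}{36k^2}$. The right-hand side is decreasing in $k$, and at $k=2$ it is about $\frac{\ln 12}{4}+\frac{1}{144}\approx 0.63 < \frac{11}{12}$.
\item For $k=1$, compute directly: $\rho=\lceil 18\ln 6\rceil = 33$. Then $\frac{\rho}{36} = \frac{33}{36} = \frac{11}{12}$, which is exactly enough.
\end{itemize}
For $k=1$ there is also ample slack if needed. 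The sharper bound $(5/6)^{33}\approx 0.0024$ together with $d\le 198$ gives a failure probability of about $0.48$.

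Should the finer constants need more room, I would fall back to these sharper estimates: keep $(1-\frac{1}{6k})^{\rho}$ unrelaxed, and use $d\le \min(N,6k\rho)$ in place of $d\le 6k\rho$.
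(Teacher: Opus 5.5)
Your proposal is correct and is essentially the paper's argument. The paper counts $I$-missing tuples directly via $|F_I| \le \prod_{i}(|S^{(i)}|-1)$ and takes a union bound over the $d \le 6k\rho$ intervals, which is your uniform-random-tuple computation phrased as a count; your case split $k=1$ versus $k\ge 2$ also justifies more carefully the paper's one-line claim $\lceil 18k\ln(6k)\rceil/(6k)^2 \le \lceil 18\ln 6\rceil/36 = 11/12$.
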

\begin{proof}
    Fix an interval $I \in \bfI$, and let $F_I$ be the set of tuples that are
    $I$-missing. By \cref{lem:visited-interval}, for each $i \in [\rho]$ there is
    at least one element in $S^{(i)}$ whose corresponding agent visits the whole
    interval $I$ during the $i$-th epoch. Hence, in order for a tuple to be
    $I$-missing, its $i$-th coordinate must avoid at least one element of
    $S^{(i)}$, for every $i \in [\rho]$. Therefore
    \[
        |F_I|
        \leq
        \prod_{i \in [\rho]} \bigl(|S^{(i)}|-1\bigr).
    \]
    It follows that the number of tuples that are $I$-missing
    for at least one interval $I \in \bfI$ is at most
    \[
        \sum_{I \in \bfI} |F_I|
        \leq
        d \cdot \prod_{i \in [\rho]} \bigl(|S^{(i)}|-1\bigr).
    \]
    Thus, the proportion of tuples that fail to be $I$-covering for at least one
    $I \in \bfI$ is at most
    \[
        d \cdot \prod_{i \in [\rho]} \bigl(|S^{(i)}|-1\bigr)
        \Big/
        \prod_{i \in [\rho]} |S^{(i)}|
    \leq
        d \cdot
        \prod_{i \in [\rho]}
        \left(1-\frac{1}{|S^{(i)}|}\right).
    \]
    We now show that this quantity is at most $11/12$. Using
    $|S^{(i)}| \leq 6k$ for every $i \in [\rho]$, we have
    \[
        \prod_{i \in [\rho]}
        \left(1-\frac{1}{|S^{(i)}|}\right)
        \leq
        \left(1-\frac{1}{6k}\right)^\rho
        \leq
        e^{-\rho/6k}.
    \]
    Since $d \leq 6k\rho$ and $\rho=\lceil 18k \ln(6k)\rceil$, this gives
    \begin{align*}
        d \cdot
        \prod_{i \in [\rho]}
        \left(1-\frac{1}{|S^{(i)}|}\right)
        &\leq
        6k\rho \cdot e^{-\rho/6k} \\
        &\leq
        6k \cdot \lceil 18k \ln(6k)\rceil
        \cdot e^{-3\ln(6k)} \\
        &=
        \frac{\lceil 18k \ln(6k)\rceil}{(6k)^2} \\
        &\leq
        \frac{\lceil 18\ln 6\rceil}{6^2}
        =
        \frac{11}{12}.
    \end{align*}
    Consequently, at least a $1/12$ fraction of all tuples are $I$-covering for every $I \in \bfI$, as required.
\end{proof}

 	%%%%%%%%%%%%%%%%%%%%%%%%%%%%	
	 \subsection{Single-agent exploration}\label{sec:single-agent-exploration}
	%%%%%%%%%%%%%%%%%%%%%%%%%%%%	
 	
 	We now show how to construct a single-agent exploration schedule within the $\rho$ epochs from 
    a tuple that is  $I$-covering for every $I \in \bfI$.
 	
 	\begin{lemma}\label{lem:tuple-to-exploration}
 		Let $(s_1, s_2, \ldots, s_{\rho}) \in S^{(1)} \times S^{(2)} \times \cdots \times S^{(\rho)}$ be a tuple that is $I$-covering for every $I \in \bfI$. Then $\cG$ can be explored from any vertex during the $\rho$ epochs.
 	\end{lemma}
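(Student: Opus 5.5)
The single explorer will simulate, in each epoch $i$, the agent $a_{s_i}$ of the $i$-th roundabout execution. It uses the repositioning part of the epoch to walk to that agent's starting vertex $v_{s_i}$, and then copies the agent's moves during the roundabout part. Covering then follows from the hypothesis that the tuple is $I$-covering for every $I\in\bfI$.

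\textbf{Repositioning.} Let $u$ be the explorer's position at the start of epoch $i$. For $i=1$ this is the given start vertex; for $i>1$ it is wherever the explorer ended epoch $i-1$. The repositioning part consists of $\Delta$ consecutive time steps, and $\cG$ is $\Delta$-temporally connected, so $\cG$ restricted to these steps is temporally connected. Hence there is a temporal walk from $u$ to $v_{s_i}$ using only these time steps. The explorer follows this walk and then waits at $v_{s_i}$ until the roundabout part starts. Waiting is allowed, because a temporal walk only requires strictly increasing times for its traversed edges.

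\textbf{Simulation.} During the roundabout part, the process uses the $t$ chosen $k$-edge-deficient snapshots. At the $j$-th of them, $a_{s_i}$ either traverses $e_q$, where $q$ is its current state, or stays put. The explorer does exactly the same; it is idle at all skipped snapshots. Whenever the agent traverses $e_q$, that edge is present in the snapshot used, so the explorer's moves form a valid temporal walk at increasing times. We must also check that the elimination phase does not interfere. An agent $a_{s_i}\in A^{(i)}$ is active after step $t$, hence active at every earlier step, since agents are only ever removed. So its trajectory is exactly $s_{s_i}(0),\dots,s_{s_i}(t)$, and the explorer visits every vertex $v_q$ with $q\in D_{s_i}(t)$. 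Concatenating over all $\rho$ epochs gives one temporal walk.

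\textbf{Coverage.} By the choice of the $m_j$, the intervals of $\bfI$ partition $[N]$. For each $I\in\bfI$ the tuple is $I$-covering, so some epoch $i$ has $I\subseteq D_{s_i}(t)$, and the explorer visits all $v_q$ with $q\in I$. Therefore every state $q\in[N]$ is visited. The DFS tour passes through every vertex of $T$, and $T$ is spanning, so all of $V$ is explored within the $\rho$ epochs.

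\textbf{Main obstacle.} The delicate point is the repositioning step. Temporal connectivity must be applied to a window of exactly $\Delta$ consecutive time steps inside the epoch, namely the first $\Delta$ steps, and these steps must lie strictly before the roundabout snapshots of the same epoch. This holds by construction of the epochs. One must also treat the end of the walk: the last epoch ends by simulating $a_{s_\rho}$, and no further repositioning is required.
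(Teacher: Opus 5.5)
Your proposal is correct and follows essentially the same route as the paper. You use the first $\Delta$ steps of each epoch and $\Delta$-temporal connectivity to reach $v_{s_i}$, copy the moves of the agent starting at $s_i$ during the roundabout part, and conclude from the fact that the intervals of $\bfI$ partition $[N]$ with each one covered in some epoch; your extra checks (waiting is allowed in temporal walks, elimination does not alter the chosen agent's trajectory, and the repositioning window precedes the roundabout snapshots) are sound details the paper leaves implicit.
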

	\begin{proof}
        For each $j \in [\rho]$, let $a^{(j)}$ be the agent that starts from state $s_j$ in
        the roundabout exploration process of the $j$-th epoch, and let $D^{(j)}$ denote the
        set of states visited by this agent during the roundabout part of that epoch.
        By the definition of $I$-covering, for every interval $I \in \bfI$ there
        exists some $j \in [\rho]$ such that $I \subseteq D^{(j)}$.
        Since the intervals in $\bfI$ form a partition of $[N]$, it follows that
        \[
        \bigcup_{j=1}^{\rho} D^{(j)} = [N].
        \]
        
        We now describe a single-agent exploration schedule that simulates the agents
        $a^{(1)},\dots,a^{(\rho)}$ across the $\rho$ epochs.
        Recall that each epoch consists of a repositioning part of $\Delta$ steps, followed by a roundabout part in which $\cG$ contains at least $t$ snapshots that are $k$-edge-deficient.
        At the beginning of epoch $j$, the explorer uses the $\Delta$-temporal connectivity
        of $\cG$ to move, during the repositioning part, to the vertex corresponding to state
        $s_j$.
        During the roundabout part, the explorer moves only within $k$-edge-deficient snapshots replicating exactly the
        moves performed by agent $a^{(j)}$ in the roundabout exploration process.
        By construction, during the roundabout part of epoch $j$ the explorer visits
        precisely the states in $D^{(j)}$.
        Therefore, over all $\rho$ epochs, the explorer visits all states in
        $\bigcup_{j=1}^{\rho} D^{(j)} = [N]$, and hence visits all vertices of $\cG$.
	\end{proof}
 
    We now have everything ready to prove \cref{th:k-def-exploration}, which we restate below for convenience.
     
    \KDefExploration*
    \begin{proof} 
        Recall that $N=2(n-1)$, $t = \left\lfloor \frac{N}{2k} \right\rfloor$, and $\rho = \lceil 18k \ln(6k) \rceil$.
        Let $L$ be the lifetime of $\cG$ and assume $\cG$ contains at least $\rho(\Delta+ \frac{n}{k})$ snapshots that are $k$-edge-deficient with respect to $T$. Then the timeline $[L]$ can be partitioned into $\rho$ intervals such that, in each interval, $\cG$ contains at least $\Delta + \frac{n}{k} \geq \Delta + t$ snapshots, each of which is $k$-edge-deficient. 
		It follows from \cref{lem:I-covering} and \cref{lem:tuple-to-exploration} that $\cG$ can be explored from any  vertex.
    \end{proof}

    %%%%%%%%%%%%%%%%%%%%%%%%%%%%%%%%%%%%%%
    \section{Computing exploration schedules efficiently}
    \label{sec:efficient-computation}
    %%%%%%%%%%%%%%%%%%%%%%%%%%%%%%%%%%%%%%

    In this section we prove \cref{th:k-def-exploration-alg}, which makes the exploration bound of \cref{th:k-def-exploration} (up to a constant factor) algorithmic for $k$-edge-deficient temporal graphs. The main algorithmic difficulty of \cref{th:k-def-exploration-alg} is that the spanning tree witnessing $k$-edge-deficiency is not part of the input.
    To address this, we first show in \cref{lem:known-span-exploration} how to turn the constructive proof of \cref{th:k-def-exploration} into an algorithm for computing an exploration schedule, assuming the spanning tree $T$ is given.
    We then show in \cref{lem:find-good-tree} how to find a suitable tree efficiently in a given $k$-edge-deficient temporal graph such that sufficiently many snapshots are $2k$-edge-deficient with respect to that tree. Finally, by combining the two lemmas, we derive \cref{th:k-def-exploration-alg}.

    Throughout this section we write
    $\tau(n,k,\Delta) := \lceil 18k \ln(6k) \rceil \left(\Delta+\frac{n}{k}\right)$.

    %%%%%%%%%%%%%%%%%%%%%%%%%%%%%%%%%%%%%%
    \subsection{Computing a schedule when the spanning tree is given}
    \label{subsec:efficient-given-tree}
    %%%%%%%%%%%%%%%%%%%%%%%%%%%%%%%%%%%%%%

    We first assume that the spanning tree $T$ from \cref{th:k-def-exploration} is given as part of the input.

    \begin{lemma}\label{lem:known-span-exploration}
        Let $\Delta$, $k$, and $n$ be natural numbers.
        Let $\cG$ be a $\Delta$-temporally connected temporal graph on $n$ vertices, and let $T$ be a spanning tree of the underlying graph of $\cG$. Suppose that $\cG$ contains at least $\tau(n,k,\Delta)$ snapshots that are $k$-edge-deficient with respect to $T$.
        Given $\cG$, $T$, and a vertex $v$, an exploration schedule of $\cG$ from $v$ can be computed by a Las Vegas algorithm in expected polynomial time. Moreover, if  $k$ is a constant, then such a schedule can be computed deterministically in polynomial time.
    \end{lemma}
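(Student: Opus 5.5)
We will make the proof of \cref{th:k-def-exploration} constructive step by step; every ingredient except the choice of a good tuple is already explicit. First, from $T$ and a root $r$ we compute a DFS tour $(v_1,\dots,v_N)$ in linear time. By scanning all snapshots and counting missing edges of $T$, we identify which snapshots are $k$-edge-deficient with respect to $T$. We then partition the initial part of the timeline greedily into $\rho=\lceil 18k\ln(6k)\rceil$ epochs, each containing $\Delta+\lceil n/k\rceil$ such snapshots. Since $\Delta\le L$ and $n/k\le n$ may be assumed, the input size bounds everything relevant. The repositioning part is the first $\Delta$ time steps of each epoch. The roundabout part is the first $t=\lfloor N/2k\rfloor$ deficient snapshots after it. In each epoch we simulate the roundabout process directly: $N$ agents, $t$ steps, and in each elimination phase redundancy is checked by union-of-circular-intervals coverage. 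This takes polynomial time per step and yields the sets $S^{(i)}$, the agents' visited intervals $D$, and their move sequences. From these we compute $M$ and the partition $\bfI$, of size $d\le 6k\rho$.

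For the \emph{Las Vegas} part, we sample the tuple $(s_1,\dots,s_\rho)$ uniformly from $S^{(1)}\times\dots\times S^{(\rho)}$. We check in polynomial time whether it is $I$-covering for every $I\in\bfI$, i.e.\ whether $\bigcup_j D^{(j)}=[N]$, and resample otherwise. By \cref{lem:I-covering} each trial succeeds with probability at least $1/12$, so the expected number of trials is at most $12$ and the output is always correct. For the deterministic version with $k$ constant, we cannot enumerate the product, since it has up to $(6k)^\rho$ elements, which is constant when $k$ is constant. In fact $\rho$ depends only on $k$, so exhaustive enumeration already takes $O(1)$ tuples times polynomial checking, and this suffices. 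A cleaner alternative is derandomization by conditional expectations on the number of uncovered intervals: fix coordinates one at a time, choosing the value minimising $\sum_I \Pr[I\text{ uncovered}\mid\text{fixed prefix}]$. These conditional probabilities factor as products over epochs and are computable exactly. The initial expectation is at most $11/12<1$, so the final deterministic tuple has zero uncovered intervals. This would even remove the need for randomness in general, but the enumeration argument is all that is claimed.

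Finally, we convert the tuple into an explicit walk following \cref{lem:tuple-to-exploration}. For each epoch $j$, we compute a temporal walk within the $\Delta$ repositioning steps from the explorer's current vertex to $v_{s_j}$. Such a walk exists by $\Delta$-temporal connectivity, and we find it by a standard earliest-arrival (foremost path) computation over those snapshots in polynomial time. We then append the moves of agent $a^{(j)}$ recorded during the simulation; these are taken in increasing time steps inside the roundabout part. For the first epoch the start is the given vertex $v$. Concatenating yields a strictly time-increasing walk visiting all vertices.

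The main obstacle is making sure the claimed runtimes are honestly polynomial. The simulation and path computations pose no issue. The subtle point is the deterministic case, where we need either that $(6k)^\rho$ is constant for constant $k$ or the conditional-expectation argument. I will present the enumeration argument as the primary route.
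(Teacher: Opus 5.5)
Your proposal follows the paper's proof essentially step for step: simulate the roundabout process in each epoch, sample a tuple uniformly (success probability at least $1/12$) and verify coverage, enumerate the at most $(6k)^\rho$ tuples when $k$ is constant, and reposition via foremost paths; your conditional-expectations remark is also sound and would even give a deterministic polynomial-time algorithm for every $k$, which is stronger than what the lemma claims. One small fix: each epoch should require $\Delta+t$ (or $\Delta+\lfloor n/k\rfloor$) deficient snapshots rather than $\Delta+\lceil n/k\rceil$, because the hypothesis only guarantees $\rho(\Delta+n/k)$ of them in total, and $t=\lfloor (n-1)/k\rfloor\le n/k$ is all the roundabout part needs.
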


    \begin{proof}
        We follow the constructive proof of \cref{th:k-def-exploration}. Recall that
        \[
            N=2(n-1), \qquad
            t=\left\lfloor \frac{N}{2k}\right\rfloor,
            \qquad
            \rho=\lceil 18k\ln(6k)\rceil .
        \]
        First, we identify the snapshots that are $k$-edge-deficient with respect to $T$. For a snapshot $G_j$, this amounts to checking whether
        $|E(T)\setminus E(G_j)|\leq k$,
        which can be done in polynomial time.

        We then scan the timeline from left to right and partition an initial part of it into $\rho$ consecutive epochs, each containing at least
        $\Delta+\frac{n}{k}$
        snapshots that are $k$-edge-deficient with respect to $T$. As in the proof of \cref{th:k-def-exploration}, the first $\Delta$ time steps of each epoch are used for repositioning, and the remaining part contains at least $t$ snapshots that are $k$-edge-deficient.

        For each epoch $i \in [\rho]$, we simulate the roundabout exploration process on the first $t$ $k$-edge-deficient snapshots in its roundabout part. The movement phase is straightforward to simulate. The elimination phase can also be implemented in polynomial time: for each active agent, we check whether the circular interval it has visited is contained in the union of the intervals visited by the other active agents, and remove it if this is the case. Repeating this until no redundant agent remains yields the set $A^{(i)}$ of active agents and the corresponding set $S^{(i)}$ of their initial states. By \cref{cor:active-agents}, we have
        $|S^{(i)}|\leq 6k$ for every $i \in [\rho]$.

        We next construct the common refinement $\bfI$ of the interval partitions induced by the sets $S^{(i)}$, exactly as in \cref{sec:patroling-agents}. The number of intervals in $\bfI$ is at most $6k\rho$, and clearly the construction is polynomial.

        It remains to choose a tuple
        $(s_1,\ldots,s_\rho) \in
            S^{(1)}\times\cdots\times S^{(\rho)}$
        that is $I$-covering for every $I \in \bfI$. By \cref{lem:I-covering}, at least a $1/12$ fraction of all tuples have this property. Hence, if we sample a tuple uniformly at random and check whether it covers every interval in $\bfI$, the expected number of trials before success is at most 12. For a fixed tuple, the check is polynomial: we compute the union of the intervals covered by the selected agents and test whether it contains every interval of $\bfI$, equivalently all states of the DFS tour.

        Once a covering tuple has been found, the single-agent exploration schedule is obtained as in \cref{lem:tuple-to-exploration}. During the repositioning part of each epoch, we compute a temporal walk of length at most $\Delta$ from the explorer's current vertex to the vertex corresponding to the selected state $s_i$.
        Such a walk exists by $\Delta$-temporal connectivity, and it can be found efficiently by the standard algorithm for foremost paths in temporal graphs \cite{XFJ03}. During the roundabout part, the explorer follows the movements of the selected virtual agent, waiting during snapshots that are not used by the roundabout process.

        This gives a Las Vegas algorithm with expected polynomial running time. If $k$ is a constant, then $\rho=O(1)$ and $|S^{(i)}|\leq 6k=O(1)$, so the total number of tuples
        \[
            \prod_{i\in[\rho]} |S^{(i)}|
            \leq
            (6k)^\rho
        \]
        is constant. In this case we can deterministically enumerate all tuples and find a covering one in polynomial time.
    \end{proof}

    %%%%%%%%%%%%%%%%%%%%%%%%%%%%%%%%%%%%%%
    \subsection{Computing a suitable spanning tree}
    \label{subsec:efficient-finding-tree}
    %%%%%%%%%%%%%%%%%%%%%%%%%%%%%%%%%%%%%%

    We now consider the setting in which the input temporal graph $\cG$ is $k$-edge-deficient, but the spanning tree witnessing this property is not given. We first show in \cref{lem:find-good-tree} that, by looking at a sufficiently long prefix of the timeline of the temporal graph, we can compute a spanning tree with respect to which half of the snapshots are $2k$-edge-deficient. We then combine this with \cref{lem:known-span-exploration} to derive \cref{th:k-def-exploration-alg}.

    \begin{lemma}\label{lem:find-good-tree}
        Let $\cG=\langle G_1,G_2,\dots,G_L\rangle$ be a $k$-edge-deficient temporal graph with underlying graph $G$, and let $q$ be a positive integer such that $2q\leq L$. Then one can compute, in polynomial time, a spanning tree $T$ of $G$ such that among the first $2q$ snapshots of $\cG$, at least $q$ snapshots are $2k$-edge-deficient with respect to $T$.
    \end{lemma}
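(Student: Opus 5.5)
The plan is to compute a spanning tree $T$ of $G$ minimizing the total number of (edge, snapshot) absences over the first $2q$ snapshots, and then argue by averaging. Let $T^*$ be the unknown spanning tree witnessing $k$-edge-deficiency of $\cG$. For each edge $e$ of $G$, define the weight
\[
w(e) := |\{ j \in [2q] : e \notin E(G_j)\}|,
\]
that is, the number of snapshots among the first $2q$ in which $e$ is absent. These weights are computable in polynomial time from the input. We then compute a minimum-weight spanning tree $T$ of $G$ with respect to $w$, for example by Kruskal's algorithm, which also runs in polynomial time.

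The key step is to compare $T$ with $T^*$. By double counting,
\[
w(T) = \sum_{e \in E(T)} w(e) = \sum_{j=1}^{2q} |E(T)\setminus E(G_j)|.
\]
Since every snapshot is $k$-edge-deficient with respect to $T^*$, we have $w(T^*) \le 2qk$. Minimality of $T$ then gives $\sum_{j=1}^{2q} |E(T)\setminus E(G_j)| \le 2qk$.

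Finally, a Markov-type argument finishes the proof. Suppose that more than $q$ of the first $2q$ snapshots had more than $2k$ missing edges of $T$. Then the sum would exceed $q \cdot 2k$, and in fact would be at least $(q+1)(2k+1) > 2qk$, contradicting the bound above. Hence at most $q$ snapshots are bad, so at least $q$ of the first $2q$ snapshots satisfy $|E(T)\setminus E(G_j)| \le 2k$, as required.

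There is no real obstacle. The only point to check is that $T$ is a spanning tree of $G$, which holds because $G$ contains the spanning tree $T^*$ and is therefore connected. The crucial observation is that $T^*$ gives an upper bound on the optimum of a tractable minimum spanning tree problem, even though $T^*$ itself is never computed.
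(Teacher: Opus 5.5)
Your proposal is correct and follows the paper's proof essentially step for step. You take a minimum-weight spanning tree with respect to the absence counts $w(e)$ over the first $2q$ snapshots, bound its total weight by $2qk$ through comparison with the unknown witness tree, and finish with the same averaging argument.
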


    \begin{proof}
        Since $\cG$ is $k$-edge-deficient, there exists a spanning tree $T_0$ of $G$ such that every snapshot of $\cG$ is $k$-edge-deficient with respect to $T_0$. We do not know $T_0$, but we can find a suitable substitute as follows.

        For every edge $e\in E(G)$, define
        $w(e) := \bigl|\{i\in[2q] : e\notin E(G_i)\}\bigr|$.
        That is $w(e)$ is the number of snapshots among the first $2q$ snapshots in which $e$ is missing. Let $T$ be a minimum-weight spanning tree of $G$ with respect to the weights $w$. Such a tree can be computed in polynomial time.

        For a spanning tree $S$ of $G$ and a snapshot $G_i$, write
        $w(G_i,S):=|E(S)\setminus E(G_i)|$
        for the number of edges of $S$ missing from $G_i$. By the choice of $T$, we have
        \[
            \sum_{i\in[2q]} w(G_i,T)
            =
            \sum_{e\in E(T)} w(e)
            \leq
            \sum_{e\in E(T_0)} w(e)
            =
            \sum_{i\in[2q]} w(G_i,T_0)
            \leq
            2qk.
        \]
        Hence at most $q$ of the first $2q$ snapshots can satisfy
        $w(G_i,T)>2k$,
        since otherwise the above sum would exceed $2qk$. Therefore, at least $q$ of the first $2q$ snapshots satisfy $w(G_i,T)\leq 2k$, meaning that they are $2k$-edge-deficient with respect to $T$.
    \end{proof}

    We now have everything to prove \cref{th:k-def-exploration-alg}, which we restate for convenience.

    \KDefExplorationAlg*
    \begin{proof}
        Let $q := \tau(n,2k,\Delta) = \lceil 36k \ln{(12k)} \rceil(\Delta+ \frac{n}{2k})$.
        Since the lifetime of $\cG$ is at least $2q$, we may apply \cref{lem:find-good-tree} to the first $2q$ snapshots. This gives, in polynomial time, a spanning tree $T$ of the underlying graph of $\cG$ such that at least $q$ of these snapshots are $2k$-edge-deficient with respect to $T$.

        Since the temporal graph $\cG_{[1,2q]}$ is $\Delta$-temporally connected, by \cref{lem:known-span-exploration}, applied with $2k$ in place of $k$, it can be explored from any vertex and the corresponding exploration schedule can be computed by a Las Vegas algorithm in expected polynomial time or deterministically in polynomial time if $k$ is a constant.
    \end{proof}

    %%%%%%%%%%%%%%%%%%%%%%%%%%%%%%%%%%%%%%
    \section{Conclusion}
    %%%%%%%%%%%%%%%%%%%%%%%%%%%%%%%%%%%%%%
    
    We have shown that temporal exploration in always-connected $k$-edge-deficient temporal graphs can be
    performed in time linear in the number of vertices, up to a factor depending only on $k$.
    More precisely, every always-connected $k$-edge-deficient temporal graph on $n$ vertices
    admits an exploration schedule of length $O(nk\log k)$. This removes the logarithmic
    dependence on $n$ from the previous $O(kn\log n)$ upper bound and nearly matches the
    known $\Omega(n\log k)$ lower bound \cite{ES22}. In particular, when $k$ is constant, the exploration
    time is $\Theta(n)$, showing that this near-static regime retains the linear-time behaviour
    of static graph traversal.
    
    Our results also give a more general view of the parameters governing temporal exploration.
    The bound extends beyond always-connected temporal graphs to the setting of
    $\Delta$-temporal connectivity, where the exploration time depends linearly on the temporal
    connectivity parameter $\Delta$. Thus, the complexity of exploration is controlled not only
    by the amount of temporal disruption, captured by $k$, but also by the robustness of
    connectivity over time, captured by $\Delta$.
    
    The bound is constructive. For $k$-edge-deficient temporal graphs, the corresponding exploration schedule can be computed by a Las Vegas algorithm in expected polynomial time, and deterministically in polynomial time when  $k$ constant.
    
    Several questions remain open. The most immediate one is whether the remaining factor of
    $k$ in the gap between the upper bound $O(nk\log k)$ and the lower bound
    $\Omega(n\log k)$ can be removed. More broadly, it would be interesting to understand
    whether similar linear-in-$n$ guarantees can be obtained under weaker notions of temporal
    stability. Such results would further clarify which aspects of temporal variation are
    responsible for the gap between static and temporal exploration.

	%%% Bibliography
	\bibliographystyle{alpha}
    \bibliography{references}

@article{CFQS12,
  author  = {Casteigts, Arnaud and Flocchini, Paola and Quattrociocchi, Walter and Santoro, Nicola},
  title   = {Time-varying graphs and dynamic networks},
  journal = {International Journal of Parallel, Emergent and Distributed Systems},
  volume  = {27},
  number  = {5},
  pages   = {387--408},
  year    = {2012},
  doi     = {10.1080/17445760.2012.668546},
  url     = {https://doi.org/10.1080/17445760.2012.668546}
}

@article{MichailSurvey,
  author  = {Michail, Othon},
  title   = {An Introduction to Temporal Graphs: An Algorithmic Perspective},
  journal = {Internet Mathematics},
  volume  = {12},
  number  = {4},
  pages   = {239--280},
  year    = {2016},
  doi     = {10.1080/15427951.2016.1177801},
  url     = {https://doi.org/10.1080/15427951.2016.1177801}
}

@article{MS16,
  author  = {Michail, Othon and Spirakis, Paul G.},
  title   = {Traveling salesman problems in temporal graphs},
  journal = {Theoretical Computer Science},
  volume  = {634},
  pages   = {1--23},
  year    = {2016},
  doi     = {10.1016/j.tcs.2016.04.006},
  url     = {https://doi.org/10.1016/j.tcs.2016.04.006}
}

@article{EHK21,
  author  = {Erlebach, Thomas and Hoffmann, Michael and Kammer, Frank},
  title   = {On temporal graph exploration},
  journal = {Journal of Computer and System Sciences},
  volume  = {119},
  pages   = {1--18},
  year    = {2021},
  doi     = {10.1016/j.jcss.2021.01.005},
  url     = {https://doi.org/10.1016/j.jcss.2021.01.005}
}

@article{BZ19,
  author  = {Bodlaender, Hans L. and van der Zanden, Tom C.},
  title   = {On exploring always-connected temporal graphs of small pathwidth},
  journal = {Information Processing Letters},
  volume  = {142},
  pages   = {68--71},
  year    = {2019},
  doi     = {10.1016/j.ipl.2018.10.016},
  url     = {https://doi.org/10.1016/j.ipl.2018.10.016}
}

@inproceedings{ES18,
  title={Faster Exploration of Degree-Bounded Temporal Graphs},
  author={Erlebach, Thomas and Spooner, Jakob T},
  booktitle={43rd International Symposium on Mathematical Foundations of Computer Science (MFCS 2018)},
  year={2018}
}

@inproceedings{EKLSS19,
  title={Two moves per time step make a difference},
  author={Erlebach, Thomas and Kammer, Frank and Luo, Kelin and Sajenko, Andrej and Spooner, Jakob T},
  booktitle={46th International Colloquium on Automata, Languages, and Programming (ICALP 2019)},
  pages={141},
  year={2019}
}

@inproceedings{AGMZ22,
  title={Faster exploration of some temporal graphs},
  author={Adamson, Duncan and Gusev, Vladimir V and Malyshev, Dmitriy and Zamaraev, Viktor},
  booktitle={1st Symposium on Algorithmic Foundations of Dynamic Networks (SAND 2022)},
  pages={5--1},
  year={2022}
}

@article{ES22,
  author  = {Erlebach, Thomas and Spooner, Jakob T.},
  title   = {Exploration of $k$-Edge-Deficient Temporal Graphs},
  journal = {Acta Informatica},
  volume  = {59},
  number  = {4},
  pages   = {387--407},
  year    = {2022},
  doi     = {10.1007/s00236-022-00421-5},
  url     = {https://doi.org/10.1007/s00236-022-00421-5}
}

@article{ES23,
  title={Parameterised temporal exploration problems},
  author={Erlebach, Thomas and Spooner, Jakob T},
  journal={Journal of Computer and System Sciences},
  volume={135},
  pages={73--88},
  year={2023},
  publisher={Elsevier}
}

@inproceedings{AFGW23,
  title={Kernelizing Temporal Exploration Problems},
  author={Arrighi, Emmanuel and Fomin, Fedor V and Golovach, Petr A and Wolf, Petra},
  booktitle={18th International Symposium on Parameterized and Exact Computation (IPEC 2023)},
  year={2023}
}

@inproceedings{DEKMM24,
  title={Exploiting Automorphisms of Temporal Graphs for Fast Exploration and Rendezvous},
  author={Dogeas, Konstantinos and Erlebach, Thomas and Kammer, Frank and Meintrup, Johannes and Moses Jr, William K},
  booktitle={51st International Colloquium on Automata, Languages, and Programming (ICALP 2024)},
  pages={55--1},
  year={2024}
}

@article{BGMR25,
  title={Improved exploration of temporal graphs},
  author={Bastide, Paul and Groenland, Carla and Michel, Lukas and Rambaud, Cl{\'e}ment},
  journal={arXiv preprint arXiv:2511.22604},
  year={2025}
}

@article{HS12,
  title={Temporal networks},
  author={Holme, Petter and Saram{\"a}ki, Jari},
  journal={Physics reports},
  volume={519},
  number={3},
  pages={97--125},
  year={2012},
  publisher={Elsevier}
}

@book{HS19,
  title={Temporal network theory},
  author={Holme, Petter and Saram{\"a}ki, Jari},
  volume={2},
  year={2019},
  publisher={Springer}
}

@inproceedings{BGKSSZ26,
  title={Temporal exploration of random spanning tree models},
  author={Baguley, Samuel and G{\"o}bel, Andreas and Klodt, Nicolas and Skretas, George and Sylvester, John and Zamaraev, Viktor},
  booktitle={Proceedings of the 2026 Annual ACM-SIAM Symposium on Discrete Algorithms (SODA 2026)},
  pages={2876--2887},
  year={2026}
}

@mastersthesis{Tag20,
  title        = {Exploring Temporal Cycles and Grids},
  author       = {Taghian Alamouti, Shadi},
  year         = {2020},
  school       = {Concordia University},
  type         = {MSc thesis}
}

@inproceedings{IKW14,
  title={Exploration of constantly connected dynamic graphs based on cactuses},
  author={Ilcinkas, David and Klasing, Ralf and Wade, Ahmed Mouhamadou},
  booktitle={21st International Colloquium on Structural Information and Communication Complexity (SIROCCO 2014)},
  pages={250--262},
  year={2014},
  organization={Springer}
}

@article{XFJ03,
  title={Computing shortest, fastest, and foremost journeys in dynamic networks},
  author={Xuan, B Bui and Ferreira, Afonso and Jarry, Aubin},
  journal={International Journal of Foundations of Computer Science},
  volume={14},
  number={02},
  pages={267--285},
  year={2003},
  publisher={World Scientific}
}
\end{document}